\tikzset{snake it/.style={decorate, decoration=snake}}
\newcommand{\eps}{\varepsilon}
\newcommand{\abs}[1]{\lvert#1\rvert}
\newtheorem{thm}{Theorem}
\newtheorem{lem}[thm]{Lemma}
\newtheorem{con}[thm]{Conjecture}
\newtheorem{df}[thm]{Definition}
\newtheorem{rem}[thm]{Remark}
\newcommand{\myUrl}[1]{
	\begin{center}
		{\small\url{#1}} 
	\end{center}
}
\title{
	On the complexity of automatic complexity
}
\author{
	Bj{\o}rn Kjos-Hanssen\footnote{
		This work was partially supported by
		a grant from the Simons Foundation (\#315188 to Bj\o rn Kjos-Hanssen).
		This material is based upon work supported by the National Science Foundation under Grant No.\ 1545707.
	}
}
\begin{document}
	\maketitle{}
	\begin{abstract}
		Generalizing the notion of
		automatic complexity of individual words due to Shallit and Wang,
		we define the automatic complexity $A(E)$ of
		an equivalence relation $E$ on a finite set $S$ of words.

		We prove that
		the problem of determining whether $A(E)$ equals the number $\abs{E}$ of equivalence classes of $E$ is \NP-complete.
		The problem of determining whether $A(E) = \abs{E} + k$ for a fixed $k\ge 1$ is
		complete for the second level of the Boolean hierarchy for \NP, i.e., $\BH_2$-complete.
		
		Let $L$ be the language consisting of all words of
		maximal nondeterministic automatic complexity.
		We characterize the complexity of infinite subsets of $L$ by showing that they can be co-context-free
		but not context-free, i.e., $L$ is \CFL-immune, but not \co\CFL-immune.

		We show that for each $\eps>0$, $L_\eps\not\in\co\CFL$, where
		$L_\eps$ is the set of all words whose deterministic automatic complexity $A(x)$ satisfies
		$A(x)\ge \abs{x}^{1/2-\eps}$.
	\end{abstract}
	\section{Introduction}
		Automatic complexity was introduced by Shallit and Wang \cite{MR1897300}
		as a way to retain some of the power of Kolmogorov complexity while
		obtaining a computable notion.
		They raised the question whether the automatic complexity of a string (which we shall call a \emph{word}) $x$ is in fact
		polynomial-time computable as a function of $\abs{x}$, the length of $x$.
		We give partial negative results for that question in two ways.
		(Our results also partially address Allender's question \cite[Open Question 3.8]{allender} whether there is evidence that automatic complexity is computationally intractable.)
		\begin{itemize}
			\item For nondeterministic automatic complexity,
				introduced by Hyde and Kjos-Hanssen \cite{Kjos-EJC},
				there is a natural notion of maximally complex words.
				We show that the language $L$ consisting of all such words is not context-free,
				by virtue of being \CFL-immune.
				This result appears to be at the right level of the complexity hierarchy, insofar as
				we also show that $L$ is not \co\CFL-immune.
				While we do not know whether $L\in\co\CFL$,
				a related language consisting of ``somewhat complex'' words is shown to be non-\co\CFL.
			\item We generalize automatic complexity to a more general notion of automatic complexity
				of equivalence relations on words, and show that is not polynomial-time computable.
				In particular, we show that
				the set of minimally complex equivalence relations is \NP-complete and
				the set of equivalence relations whose complexity is exactly a constant $k$ above
				the minimum is $\BH_2$-complete.
		\end{itemize}

		In the past,
		Gold \cite{MR0495194} and Angluin \cite{MR523447}
		established \NP-completeness for related problems.
		Heggernes et al.~\cite{MR3305960} considered parametrized complexity variations, such as
		fixing the number of states at two ($\abs{Q}=2$) and increasing the alphabet size.

		As an illustration of the power and computability of automatic complexity, we have created the following web service.
		To find the complexity of, say, the word $01011010$, and an illustration of any automaton used in the associated proof,
		go to
		\myUrl{http://math.hawaii.edu/wordpress/bjoern/complexity-of-01011010/}
		Alternatively, play the \textsc{Complexity Guessing Game} at:
		\myUrl{http://math.hawaii.edu/wordpress/bjoern/complexity-guessing-game/}

	\section{The set of maximally complex words is \CFL-immune but not \co\CFL-immune}
		\begin{df}[{\cite{Kjos-EJC, MR1897300}}]
			Let $x$ be a finite word. The nondeterministic automatic complexity $A_{\mathrm N}(x)$ of $x$ is the minimum number of states of a nondeterministic
			finite automaton that accepts $x$, and does not accept any other word of length $\abs{x}$,
			and accepts $x$ via only one computation path.

			The (deterministic) automatic complexity $A(x)$ of $x$ is the minimum number of states of a deterministic
			finite automaton that accepts $x$, and does not accept any other word of length $\abs{x}$.
		\end{df}
		\begin{thm}[Hyde \cite{Kjos-EJC}]\label{again}
			For a word $x$ of length $n$,
			\[
				A_{\mathrm N}(x) \le \left\lfloor \frac{n}2\right\rfloor + 1.
			\]
		\end{thm}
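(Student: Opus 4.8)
The plan is to exhibit, for every word $x = x_1 x_2 \cdots x_n$, a single nondeterministic automaton $M$ with at most $\lfloor n/2\rfloor + 1$ states that witnesses the bound: $M$ must accept $x$ along exactly one computation path and accept no other word of length $n$. Since the naive path automaton that reads $x$ one letter at a time needs $n+1$ states, the whole point is to process roughly two letters per state. The natural device is to \emph{fold} the word. Lay out states $q_0, q_1, \ldots, q_m$ with $m = \lfloor n/2\rfloor$, read the first half of $x$ while moving up ($q_{i-1}\xrightarrow{x_i} q_i$ for $1\le i\le m$) and read the second half while moving back down the same states ($q_i \xrightarrow{x_{\,n+1-i}} q_{i-1}$ for $1\le i\le m$), so that the intended accepting computation traverses $q_0 \to q_1 \to \cdots \to q_m \to \cdots \to q_0$ and consumes all $n$ letters with only $m+1$ states. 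When $n$ is odd the single leftover middle letter $x_{m+1}$ is read by a self-loop at the turnaround state $q_m$; this is exactly what accounts for the floor in the bound.

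After fixing $q_0$ as both start and accepting state, the easy direction is immediate: $M$ accepts $x$ along the displayed computation. The substance lies in the two remaining conditions, namely that this computation is the \emph{only} accepting path on $x$ and that \emph{no} other word of length $n$ is accepted. I would analyze an arbitrary accepting walk of length $n$ from $q_0$ back to $q_0$, noting that every non-loop step changes the state index by exactly $\pm 1$. A parity/counting argument then applies: the number of up-steps equals the number of down-steps, the index stays in $[0,m]$, and the walk is a Dyck-like lattice path; the intended computation is the unique such path that rises straight to $m$ and descends straight back. The goal is to use the edge labels to force every accepting walk to be this one.

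The hard part will be precisely here, because a forward edge out of a state and a backward edge into it can carry the same letter, so the folded graph may contain short spurious cycles that yield extra accepting walks spelling different length-$n$ words. The most dangerous case is the $2$-cycle between $q_0$ and $q_1$ that arises when $x_1 = x_n$: for instance the fold of $0010$ also accepts $0000$ via $q_0\to q_1\to q_0\to q_1\to q_0$. Resolving this is where the construction must become sensitive to the letters of $x$. I would break the offending symmetries by relocating the start and accepting states (so the start carries no incoming edge, killing the end $2$-cycle), by choosing the fold point or the orientation of $x$ to separate coinciding labels, and, where a local coincidence is genuinely unavoidable, by splitting one state into two. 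Verifying that after these adjustments every accepting walk is forced to be the intended one, for \emph{every} $x$ and within the state budget, is the crux; I expect it to require a case analysis on the pattern of coincidences among the paired letters $x_i$ and $x_{n+1-i}$, together with the observation that the words for which no coincidence can be removed are exactly the ``complex'' extremal words, for which the plain fold already works.
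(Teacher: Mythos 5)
Your folding construction is the same as the one the paper sketches in Figure \ref{fig1}, but your proposal does not close the uniqueness argument, and you locate the difficulty in the wrong place. For odd $n=2m+1$, with the loop at the turnaround state, uniqueness of the accepting walk is purely structural and requires no analysis of edge labels whatsoever: every non-loop edge changes the state index by $\pm 1$ and the only loop sits at index $m$, so in any closed walk of length $n$ at $q_0$ the up-steps and down-steps pair off, forcing the loop to be used an odd number of times; but using it even once already costs $m$ up-steps and $m$ down-steps, i.e.\ all remaining steps. Hence the intended walk (straight up, loop once, straight down) is the \emph{only} length-$n$ walk from $q_0$ to $q_0$, no matter how the letters of $x$ coincide. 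Your worry about spurious cycles is empty in the odd case, and your claim that the intended computation is ``the unique Dyck-like lattice path'' is false as stated: $q_0\to q_1\to q_0\to\cdots$ is also such a lattice path, and it is excluded only by the loop-counting argument, not by any property of the labels. Your subsequent plan to ``use the edge labels to force every accepting walk'' is therefore both unnecessary (odd case) and, as your own example shows, unworkable (even case).

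Where you are right to worry is the even case: the plain fold with no loop is genuinely wrong, as your example of $0010$ accepting $0000$ demonstrates. But the repairs you gesture at do not survive scrutiny: reversing the word fails (the fold of $0100$ also accepts $0000$ by the same $2$-cycle), ``splitting one state into two'' blows the budget, and a ``case analysis on the pattern of coincidences among the paired letters'' is not carried out; your closing claim that the irreparable words are exactly the maximally complex ones is unsupported. The missing idea is a clean reduction of even to odd: for $n$ even, spend the one spare state (note $\lfloor n/2\rfloor+1 = n/2+1$, while folding $n-1$ letters needs only $n/2$ states) on a fresh initial state with no incoming edges and a single outgoing edge labeled $x_1$, then fold $x_2\cdots x_n$, which has odd length, with its loop at the turnaround. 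Any accepting walk must begin with the forced first edge and then perform a closed walk of odd length at the old start state, which is unique by the odd-case parity argument. With these two pieces the proof is uniform in $x$, with no case analysis on letter coincidences at all.
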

		An idea of the proof of Theorem \ref{again} is given in Figure \ref{fig1}.
		\begin{figure}
			\centering
			\begin{tikzpicture}[shorten >=1pt,node distance=1.5cm,on grid,auto]
				\node[state,initial, accepting] (q_1)   {$q_1$}; 
				\node[state] (q_2)     [right=of q_1   ] {$q_2$}; 
				\node[state] (q_3)     [right=of q_2   ] {$q_3$}; 
				\node[state] (q_4)     [right=of q_3   ] {$q_4$};
				\node        (q_dots)  [right=of q_4   ] {$\ldots$};
				\node[state] (q_m)     [right=of q_dots] {$q_m$};
				\node[state] (q_{m+1}) [right=of q_m   ] {$q_{m+1}$}; 
				\path[->] 
					(q_1)     edge [bend left]  node           {$x_1$}     (q_2)
					(q_2)     edge [bend left]  node           {$x_2$}     (q_3)
					(q_3)     edge [bend left]  node           {$x_3$}     (q_4)
					(q_4)     edge [bend left]  node [pos=.45] {$x_4$}     (q_dots)
					(q_dots)  edge [bend left]  node [pos=.6]  {$x_{m-1}$} (q_m)
					(q_m)     edge [bend left]  node [pos=.56] {$x_m$}     (q_{m+1})
					(q_{m+1}) edge [loop above] node           {$x_{m+1}$} ()
					(q_{m+1}) edge [bend left]  node [pos=.45] {$x_{m+2}$} (q_m)
					(q_m)     edge [bend left]  node [pos=.4]  {$x_{m+3}$} (q_dots)
					(q_dots)  edge [bend left]  node [pos=.6]  {$x_{n-3}$} (q_4)
					(q_4)     edge [bend left]  node           {$x_{n-2}$} (q_3)
					(q_3)     edge [bend left]  node           {$x_{n-1}$} (q_2)
					(q_2)     edge [bend left]  node           {$x_n$}     (q_1);
			\end{tikzpicture}
			\caption{
				A nondeterministic finite automaton that only accepts one word
				$x= x_1 x_2 x_3 x_4 \cdots x_n$ of length $n = 2m + 1$.
			}
			\label{fig1}
		\end{figure}
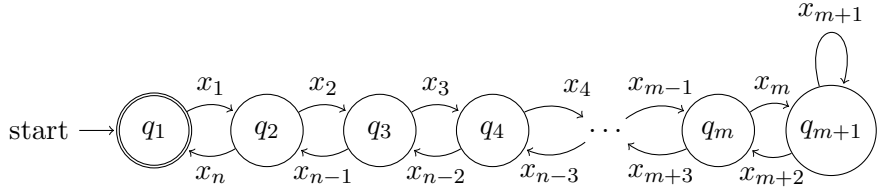
		\begin{df}
			Let $b(n) = \lfloor \frac{n}2\rfloor + 1$ be the canonical upper bound for $A_{\mathrm N}$ from Theorem \ref{again}.
			Let $L_k = \{x \in \{0, 1, \dots, k - 1\}^*: A_{\mathrm N}(x) = b(n)\}$.
			Any $x\in L_k$ is called a \emph{maximally complex} word.
		\end{df}
		\begin{rem}
			$L_3$ is known to be infinite (see Theorem \ref{berstel}) but we do not know whether $L_2$ is infinite.
		\end{rem}
		\begin{lem}\label{bahnhof}
			Let $x_0$, $y_0$, $a$, $b$ be positive integers with $a$ and $b$ relatively prime, $x_0<b$, and $y_0<a$.
			Then the equation
			\begin{equation}
				a x + b y = a x_0 + b y_0\label{bonnie}
			\end{equation}
			has a unique solution $(x,y)$ in nonnegative integers.
		\end{lem}
		\begin{proof}
			Equation (\ref{bonnie}) implies
			\[
				a(x - x_0)\equiv 0\pmod{b}.
			\]
			Since $a$ and $b$ are relatively prime it follows that $x-x_0\equiv 0\pmod{b}$.
			Thus $x=x_0 + n b$ for some $n\in\mathbb Z$.
			If $n<0$ then $x\le x_0 - b < 0$, which contradicts the requirement that $x\ge 0$.
			If $n>0$ then using $y\ge 0$,
			\[
			 	ax + by \ge a(x_0 + b) + b(0) > ax_0 + by_0
			\]
			contradicting (\ref{bonnie}). Thus $n=0$ and the only solution is $x=x_0$.
		\end{proof}
		\begin{df}
			For any collection of languages $\mathsf{M}$, a language $L$ is $\mathsf{M}$-immune if
			it is infinite and contains no infinite subset in $\mathsf{M}$.
			We say that $L\in\co\mathsf{M}$ if the complement of $L$ belongs to $\mathsf{M}$.
			Let \CFL\ be the class of all context-free languages.
		\end{df}
		\begin{thm}[Pumping lemma for \CFL\ {\cite{MR0151376}}]\label{pump}
			If a language $L$ is context-free, then there exists some integer $p\ge 1$ (a \emph{pumping length}) such that
			every word $s$ in $L$ with $\abs{s}\ge p$ can be written as $s = uvwxy$
			where
			\begin{enumerate}
				\item $\abs{vwx}\le p$,
				\item $\abs{vx}\ge 1$, and
				\item $uv^Nwx^Ny$ is in $L$ for all $N\ge 0$.
			\end{enumerate}
		\end{thm}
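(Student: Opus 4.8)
The plan is to reduce to a normal form and then apply the pigeonhole principle to parse trees. First I would fix a context-free grammar $G$ generating $L$ and convert it to Chomsky normal form, so that every production is either of the shape $A\to BC$ with $B,C$ nonterminals or $A\to a$ with $a$ a terminal, discarding any $\eps$-productions (these do not affect membership of the long words we care about). Let $m$ be the number of nonterminals of $G$ and set the pumping length to $p=2^{m}$.

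The key geometric fact is that in Chomsky normal form a parse tree whose longest root-to-leaf path has length $h$ yields a word of length at most $2^{h-1}$, since the tree is binary above the terminal level. Hence any $s\in L$ with $\abs{s}\ge p$ has a parse tree containing a root-to-leaf path with more than $m$ internal (nonterminal-labeled) nodes. By the pigeonhole principle some nonterminal $A$ labels two distinct nodes on this path; choosing the two lowest such repeated occurrences guarantees that the subtree rooted at the higher occurrence has height at most $m+1$ and therefore yields a subword of length at most $2^{m}=p$.

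Writing the higher $A$-node as deriving the subword $vwx$ and the lower $A$-node as deriving $w$, I would record the decomposition $s=uvwxy$, where $u$ and $y$ are the portions of $s$ generated to the left and right of the higher $A$'s subtree. Because the derivation factors as $A\Rightarrow^{*}vAx$ and $A\Rightarrow^{*}w$, iterating the first step $N$ times and then applying the second gives $A\Rightarrow^{*}v^{N}wx^{N}$, so $uv^{N}wx^{N}y\in L$ for every $N\ge 0$, which is condition (3); the bound $\abs{vwx}\le p$ is condition (1) from the previous paragraph. For condition (2), the production applied at the higher $A$ has the form $A\to BC$, and exactly one of the two children lies on the path to the lower $A$ while the other generates a nonempty string (no $\eps$- or unit productions survive in Chomsky normal form); this forces $\abs{vx}\ge 1$.

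The main obstacle I anticipate is bookkeeping rather than conceptual: making the Chomsky-normal-form conversion rigorous while correctly excising $\eps$-productions, and pinning down the exact relationship between parse-tree height and yielded-word length so that the constants in $p=2^{m}$ and the two length inequalities line up. Everything else is a direct pigeonhole argument on a longest path, followed by the standard cut-and-paste of the repeated subtree.
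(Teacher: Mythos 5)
This theorem is not proved in the paper at all: it is the classical Bar-Hillel--Perles--Shamir pumping lemma, stated with a citation to the literature and used as a black box. Your proposal supplies the standard textbook proof (Chomsky normal form, $p=2^{m}$, pigeonhole on a longest root-to-leaf path, cut-and-paste of the repeated subtree), and it is correct. Two small points of rigor you already half-acknowledge: the root-to-leaf path must be chosen to be a \emph{longest} path, since otherwise the suffix of the path below the upper $A$-node does not bound the height of that subtree (you invoke this implicitly when claiming the yield of the upper subtree has length at most $2^{m}$); and the pigeonhole should be applied specifically to the lowest $m+1$ nonterminal-labeled nodes on that path --- ``the two lowest occurrences of a repeated nonterminal'' only gives the height bound when the repetition is found among those bottom $m+1$ nodes. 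With those readings fixed, conditions (1)--(3) follow exactly as you argue, including $\abs{vx}\ge 1$ from the binary production $A\to BC$ and the absence of $\eps$- and unit productions.
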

		\begin{df}\label{morphismDef}
			Let $\Sigma$ be a finite alphabet.
			A function $\pi:\Sigma^*\to\Sigma^*$ is a \emph{homomorphism}
			if it respects concatenation: for all $x$, $y$,
			\[
				\pi(xy)=\pi(x)\pi(y).
			\]
		\end{df}
		\begin{thm}\label{berstel}
			$L_3$ is not \co\CFL-immune.
		\end{thm}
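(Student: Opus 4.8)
The goal is to exhibit an infinite language $S\subseteq L_3$ with $S\in\co\CFL$; since $L_3$ is infinite, this makes $L_3$ not $\co\CFL$-immune. Because $L_3$ is $\CFL$-immune (the companion result of this section), $S$ cannot itself be context-free, so it must be genuinely of $\co\CFL\setminus\CFL$ type, in the spirit of $\{0^n1^n2^n\}$: a language that is not context-free but whose complement is. The tension to overcome is that such ``counting'' languages are highly repetitive, whereas membership in $L_3$ forces maximal complexity and hence the avoidance of exactly the repetitions a smaller nondeterministic automaton could exploit.

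The plan is to decouple the two demands using a morphism. First I fix a \emph{squarefree, length-uniform} homomorphism $\Phi$ onto $\{0,1,2\}^*$ (the kind supplied by Berstel's work on squarefree words, cf.\ Definition~\ref{morphismDef}), so that $\Phi$ carries squarefree words to squarefree words and sends each letter to a distinct block of one fixed length $\ell$. Then I design a ``skeleton'' language $T$ over an auxiliary alphabet containing a few marker letters: the markers delimit regions whose lengths are tied together by equalities, so that $T$ realizes a triple-count structure, yet $T$ is itself arranged to be squarefree. The markers are essential: after applying $\Phi$ they survive as recognizable fixed blocks, which lets a pushdown automaton locate the region boundaries and compare region lengths, and so refute a spurious count with a single stack.

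Transporting complexity from $T$ down to $S=\Phi(T)$ is then routine from the closure properties of $\CFL$. Writing $R_0$ for the (regular) set of legitimate block-by-block $\Phi$-images, the distinct-block condition makes $\Phi$ injective with unique decoding, whence $\overline{S}=\overline{R_0}\cup\Phi(\overline{T})$, where $\overline{T}$ is the complement of $T$ in the skeleton alphabet. If $T\in\co\CFL$, then $\overline{T}\in\CFL$, and since $\CFL$ is closed under homomorphic images, $\Phi(\overline{T})\in\CFL$; as $\overline{R_0}$ is regular, $\overline{S}\in\CFL$, i.e.\ $S\in\co\CFL$. It then remains to (a) build the squarefree $T$ with $\overline{T}\in\CFL$ via the marker-and-equality gadget, and (b) prove that every $\Phi(s)$ with $s\in T$ is maximally complex.

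Part (b) is where I expect the main obstacle, and where Lemma~\ref{bahnhof} enters. For a single word, maximality of $A_N$ is shown by assuming a smaller NFA accepts it along its unique path, extracting by pigeonhole two cycles of lengths $a,b$ on that path, and observing that the accepted lengths then realize representations of the form $ax+by$; Lemma~\ref{bahnhof} pins these representations down and forces the automaton to accept a second word of length $n$, contradicting the bound $b(n)$ of Theorem~\ref{again} together with the definition of $A_N$. The difficulty is to run this argument \emph{uniformly} over the entire infinite family $\{\Phi(s):s\in T\}$: squarefreeness of $\Phi(s)$ must be strong enough to preclude every short cycle structure a competing NFA might use, and the marker-induced regularities of the gadget must not themselves create exploitable repetitions. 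Reconciling this rigidity of maximal complexity with the counting rigidity required for $\overline{S}\in\CFL$ — engineering words that are simultaneously squarefree enough to be maximally complex and structured enough to be $\co\CFL$ — is the crux of the proof.
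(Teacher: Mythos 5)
Your proposal is a plan rather than a proof: both of its load-bearing parts are left open. Part (a), the construction of a skeleton language $T$ all of whose words are squarefree yet whose complement is context-free via a counting gadget, and part (b), the uniform argument that every $\Phi(s)$ with $s\in T$ is maximally complex, are precisely where you locate the ``main obstacle'' and the ``crux,'' and neither is carried out. Worse, the two demands you impose on $T$ pull against each other in a way the marker gadget cannot fix: if membership in $T$ is defined by format plus length-equalities, then $T$ contains words in which the counted regions coincide, and such words contain squares, so their $\Phi$-images need not lie in $L_3$ at all; if instead you add squarefreeness as a membership condition on $T$, then $\overline{T}$ must contain every well-formatted word that has a square as a factor, and there is no reason that language is context-free (the set of ternary squarefree words is known not to be context-free, and a single stack has no evident way to detect squares). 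So the reconciliation you defer is not a technicality to be checked later; it is the entire problem, and the counting-based design makes it harder, not easier.

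The misstep is the opening inference that $S$ must be ``in the spirit of $\{0^n1^n2^n\}$'': it is true that any infinite $S\subseteq L_3$ with $S\in\co\CFL$ is non-context-free (by Theorem \ref{context}), but co-CFL-not-CFL languages need not have counting structure, and the paper's proof uses none. It simply takes Thue's infinite squarefree word $\mathbf{t}$ over $\{0,1,2\}$, which is a fixed point of a homomorphism; by a theorem of Berstel, the prefix language $\mathrm{Pref}(\mathbf{t})$ is in $\co\CFL$; and by the result cited from \cite{Kjos-EJC}, every ternary squarefree word lies in $L_3$, so $\mathrm{Pref}(\mathbf{t})$ is an infinite $\co\CFL$ subset of $L_3$. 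Your instinct to obtain maximal complexity from squarefreeness of morphic words is the right one, but you should invoke the known theorem that squarefree ternary words are maximally complex instead of re-deriving it via Lemma \ref{bahnhof} (which the paper needs only for Theorem \ref{context}), and you should obtain the $\co\CFL$ property from Berstel's theorem on prefix languages of fixed points of morphisms rather than from a counting gadget.
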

		\begin{proof}
			Let $\mathbf t$ be an
			infinite square-free word over $\{0,1,2\}$ generated by, and a fixed point of, a homomorphism.
			Such a $\mathbf t$ was constructed by
			Thue \cite{ThueTwo}.
			Let
			\[
				\mathrm{Pref}(\mathbf t) = \{x : x \text{ is a prefix of }\mathbf t\}.
			\]
			By Berstel \cite[Theorem on page 7]{MR825639}, $\mathrm{Pref}(\mathbf t)\in \co\CFL$.
			Since by \cite[Theorem 18]{Kjos-EJC} every square-free word over $\{0,1,2\}$ belongs to $L_3$,
			we also have $\mathrm{Pref}(\mathbf t)\subseteq L_3$.
		\end{proof}
		\begin{thm}\label{context}
			$L_3$ is \CFL-immune.
		\end{thm}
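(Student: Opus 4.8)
The plan is to argue by contradiction using the pumping lemma for context-free languages (Theorem \ref{pump}). Suppose $M\subseteq L_3$ were an infinite context-free language and let $p$ be a pumping length for $M$. Choose $s\in M$ with $\abs{s}\ge p$ and write $s=uvwxy$ as in Theorem \ref{pump}, so that $\abs{vx}\ge 1$ and $s_N:=uv^Nwx^Ny\in M\subseteq L_3$ for every $N\ge 0$. Since $\abs{vx}\ge 1$, at least one of $v,x$ is nonempty, so each $s_N$ contains an arbitrarily long periodic factor (its $N$-th power). Setting $n_N=\abs{s_N}=\abs{s}+(N-1)\abs{vx}$, we have $n_N\to\infty$ and hence $b(n_N)=\lfloor n_N/2\rfloor+1\to\infty$. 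The goal is to show that for all large $N$ one nonetheless has $A_N(s_N)<b(n_N)$, so that $s_N\notin L_3$, contradicting $s_N\in M\subseteq L_3$.

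To produce such an economical automaton I would begin from the folding automaton of Figure \ref{fig1}, which already witnesses $A_N(s_N)\le b(n_N)$ for every word, and then exploit the periodic block to merge states and drop strictly below the bound. Say $v\ne\varepsilon$ with $\abs v=a$; the stretch of the automaton reading the factor $v^N$ visits states that repeat with period $a$, so identifying states $a$ apart collapses roughly $Na$ states into a single cycle of length $a$, while the remaining, non-periodic part of $s_N$ is read by the ordinary fold at cost about half its length. A direct count then gives on the order of $\tfrac12 n_N-(N-2)a$ states, which is below $b(n_N)$ once $N$ is large, the savings coming precisely from replacing a $\Theta(N)$-length stretch by a fixed-size cycle. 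Thus the state count is not where the difficulty lies.

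The crux --- and the reason Lemma \ref{bahnhof} is available --- is to verify that this collapsed automaton is a \emph{legal} witness for $A_N(s_N)$: it must accept $s_N$ along a unique computation path and must accept no other word of length $n_N$. Once the periodic block is wrapped into a cycle, a computation of length $n_N$ is determined by how many times each cycle of the automaton is traversed, and the requirement that the total number of symbols read equal $n_N$ becomes a linear Diophantine equation in those traversal counts. I would arrange the construction --- choosing $N$ suitably and controlling where the fold is centred and how the periodic factor is wrapped --- so that the two relevant cycle lengths are relatively prime and the intended traversal counts fall in the ranges $x_0<b$, $y_0<a$ demanded by Lemma \ref{bahnhof}; the lemma then forces the solution to be unique, so that exactly one word of length $n_N$, namely $s_N$ itself, is accepted, and only along the single intended path. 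I expect this uniqueness bookkeeping to be the main obstacle, since a careless two-cycle construction accepts many spurious words of length $n_N$ (for instance $uv^{k_1}wx^{k_2}y$ whenever $\abs v k_1+\abs x k_2$ hits the right value), and it is exactly the coprimality hypothesis of Lemma \ref{bahnhof} that excludes them. Granting this, $A_N(s_N)<b(n_N)$ for all large $N$, contradicting $M\subseteq L_3$ and proving that $L_3$ is \CFL-immune.
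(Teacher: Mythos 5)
Your proof skeleton --- pumping lemma contradiction, then exhibiting a too-small $A_N$-witness for the pumped words, with Lemma \ref{bahnhof} doing the uniqueness bookkeeping --- is exactly the paper's, but your actual construction cannot be made into a legal witness, and Lemma \ref{bahnhof} cannot rescue it. You wrap $v^N$ into a cycle of length $a=\abs{v}$ traversed roughly $N$ times and fold the rest of the word as in Figure \ref{fig1}. The resulting digraph then contains two kinds of cycles: your $a$-cycle and the fold's turnaround (e.g.\ the self-loop of Figure \ref{fig1}; a folded path also has short back-and-forth cycles along its length). An accepting walk of length $n_N$ is now constrained only by a linear equation such as $ar+t=aN+1$, which has the $N+1$ nonnegative solutions $(r,t)=(N-j,\,1+ja)$, $j=0,\dots,N$; each such walk is realizable, and it either spells a word of length $n_N$ different from $s_N$ or gives a second accepting path for $s_N$ --- either way the automaton is disqualified. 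Your own plan concedes that the intended traversal counts must satisfy $x_0<b$, $y_0<a$, but with cycle lengths $a$ and $1$ (or $2$) this would force the intended count of the $a$-cycle to be less than $1$ (or $2$), while in your design it is $N\to\infty$. The contradiction is structural: the state savings in your count come precisely from traversing a \emph{short} cycle \emph{many} times, and that is exactly the regime in which uniqueness must fail. No choice of fold center avoids this.

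The paper's proof supplies the two ideas your proposal is missing. First, a case split. If $\abs{v}\neq\abs{x}$, say $\abs{v}>\abs{x}$, it loops on $v$ only and reads everything else straight, with no fold at all (Figure \ref{contextFig1}); the digraph then has a single cycle, so the number of traversals is forced by the length and uniqueness is automatic, and the count beats $n_N/2$ because $v^N$ asymptotically occupies more than half of $X_N$. Second, in the remaining case $\abs{v}=\abs{x}=d$, where one short loop cannot save enough, it uses two \emph{long} loops of lengths $ad$ and $bd$ with $\gcd(a,b)=1$, $b=a+1$, plus $id$ straggler states (Figure \ref{contextFig}), and pumps to $N=bi$ so that each loop is traversed exactly $i=3$ times. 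Now the intended counts do satisfy $i<a$ and $i<b$, Lemma \ref{bahnhof} applies, and the witness is legal. The essential trick --- pump a lot, but absorb the pumped blocks into long cycles traversed a bounded number of times, rather than short cycles traversed unboundedly often --- is the step your proposal would need and does not have.
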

		\begin{proof}
			Since $L_3$ is not \co\CFL-immune (Theorem \ref{berstel}), in particular $L_3$ is infinite.
			Suppose $L_3$ has an infinite subset $K\in\CFL$.
			By the pumping lemma (Theorem \ref{pump})
			there is a ``pumping length'' $p$ such that
			any word $X\in K$ of length at least $p$
			can be written as $X = uvwxy$, where $\abs{vx}\ge 1$ and
			\[
				X_N := uv^Nwx^Ny\in K\subseteq L_3\qquad\text{for all }N \ge 0.
			\]
			We denote the length of $X_N$ by $n_N$.
			Since $L_3$ is infinite, there exists at least one such word $X$.

			\emph{Case 1:} $\abs{v} \ne \abs{x}$.
				Let us first assume $\abs{v}>\abs{x}$.
				In particular, $\eps := \frac{\abs{v}}{\abs{v}+\abs{x}}-\frac12>0$.
				Consider an automaton which loops at each occurrence of $v$ and otherwise proceeds to the right (Figure \ref{contextFig1}).
				Let $N$ be so large that $\abs{v^N}\ge\left(\frac12+\frac{\eps}2\right)n_N$
				and $\frac{\abs{v}}{n_N}\le\frac{\eps}2$.
				Then
				\[
					A_{\mathrm N}(X_N)\le \abs{uwx^Ny}+\abs{v} = n_N-\abs{v^N} + \abs{v}\le n_N - \left(\frac12+\frac{\eps}2\right)n_N + \abs{v}
				\]
				\[
					= \left(\frac12-\frac{\eps}2\right)n_N + \abs{v}  \le \frac{n_N}2.
				\]
				and so $X_N\not\in L_3$.

		\begin{figure}
			\centering
			\begin{tikzpicture}[shorten >=1pt,node distance=2.5cm,on grid,auto]
				\node[state,initial]   (q_0)                {$q_0$};
				\node[state]           (q_1) [right of=q_0] {$q_1$};
				\node[state]           (q_2) [right of=q_1] {$q_2$};
				\path[->]
					(q_0) edge                 node {$u$}   (q_1)
					(q_1) edge                 node {$wx^Ny$} (q_2)
					(q_1) edge [loop above]    node {$v$} (q_1);
			\end{tikzpicture}
			\caption{
				Schematic of the automaton for Case 1 of the proof of Theorem \ref{context}.
				The number of states in the actual automaton is $\abs{uvwx^Ny}$.
			}\label{contextFig1}
		\end{figure}

		\begin{figure}
			\centering
			\begin{tikzpicture}[shorten >=1pt,node distance=2.5cm,on grid,auto]
				\node[state,initial]   (q_0)                {$q_0$}; 
				\node[state]           (q_1) [right of=q_0] {$q_1$}; 
				\node[state]           (q_2) [right of=q_1] {$q_2$}; 
				\node[state]           (q_3) [right of=q_2] {$q_3$}; 
				\node[state]           (q_4) [right of=q_3] {$q_4$}; 
				\path[->]
					(q_0) edge                 node {$u$}   (q_1)
					(q_1) edge                 node {$v^i$} (q_2)
					(q_2) edge                 node {$w$}   (q_3)
					(q_3) edge                 node {$y$}   (q_4)
					(q_1) edge [loop above]    node {$v^a$} (q_1)
					(q_3) edge [loop above]    node {$x^b$} (q_3);
			\end{tikzpicture}
			\caption{
				Schematic of the automaton for Case 2 of the proof of Theorem \ref{context}.
				The number of states in the actual automaton is $\abs{uv^av^iwx^by}-1$.
			}\label{contextFig}
		\end{figure}
			The case where $\abs{x}>\abs{v}$ is quite identical.
			The remainder of the proof concerns Case 2.

			\emph{Case 2:} $d:=\abs{v}=\abs{x}>0$.
			By Lemma \ref{bahnhof},
			for any positive integer $i$,
			the equation $ar+bs=i(a+b)$ has only the solution $r=s=i$ provided that
			$a$ and $b$ are relatively prime and both $a$ and $b$ are greater than $i$.
			In particular, this holds for any $a$ and $b$ with $a > i$ and $b = a + 1$.

			We construct an automaton $M$ as follows (Figure \ref{contextFig}).
			We put one loop of length $ad$ and later one of length $bd$,
			and add
			$id$ additional
			straggling states after the smaller loop of length $ad$.
			There are no loops apart from that. 

			Now for the analysis.
			Let $N = bi$.
			Each of the loops of $M$ will be traversed $i$ times during the processing of the word
			\[
				X_N = uv^{bi}wx^{bi}y.
			\]
			Let $U = \abs{u} + \abs{w} + \abs{y}$.
			Let us compare
			\[
				\abs{X_N} = n_N = U + 2bdi
			\]
			to the number of states of $M$,
			\[
				q = U + bd + ad + id - 1 = U + 2bd + (i-1)d - 1.
			\]
			(Note that when $i=1$, $q = (n_N + 1) - 2$ as expected as there are 2 repetitions of states.)
			In order to show $X_N\not\in L_3$ we need $q < \lfloor n_N/2\rfloor + 1$.
			To that end it suffices to have $q<n_N/2$, i.e.,
			\[
				bd + ad + id + U - 1 = (2b-1+i)d + U - 1 < \frac12(2bdi + U) = bdi + \frac{U}2.
			\]
			Equivalently,
			\[
				i - 1 + \frac{U}{2d} - \frac1d < (i - 2)b.
			\]
			Choose $i=3$; then the inequality will hold for all sufficiently large $b$.
			Thus, $M$ witnesses that $X_N\not\in L_3$, in contradiction to the pumping lemma (Theorem \ref{pump}).
		\end{proof}
	\section{Somewhat simple words do not form a \CFL}
		Let \RE\ denote the collection of
		recursively enumerable (or if you prefer, computably enumerable) languages.
		Recall that $L_3$ is the set of maximally complex words for nondeterministic automatic complexity over the
		alphabet $\{0,1,2\}$. Let $C$ denote plain Kolmogorov complexity and
		let
		\[
			R = \left\{x : C(x) \ge \abs{x}\right\}
		\]
		be the corresponding set of random words.
		We have seen (Theorem \ref{berstel} and Theorem \ref{context})
		that $L_3$ is \CFL-immune but not \co\CFL-immune.
		This is a pleasant analogue of
		the classical fact that $R$ is \RE-immune \cite[Section 3.1]{MR2732288}
		but not \co\RE-immune.
		Indeed, $R\in\co\RE$, but we conjecture that the analogous statement $L_3\in\co\CFL$ fails.
		\begin{con}\label{extra}
			$L_3\not\in\co\CFL$.
		\end{con}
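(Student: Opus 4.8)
The plan is to prove the equivalent statement that the language of \emph{somewhat simple} words
\[
	\overline{L_3}=\curlypars{x\in\{0,1,2\}^*: A_N(x)\le\lfloor \abs{x}/2\rfloor}
\]
is not context-free. I would argue by the pumping lemma (Theorem \ref{pump}): assume for contradiction that $\overline{L_3}\in\CFL$ with pumping length $p$. Since \CFL\ is closed under intersection with regular languages, I would first replace $\overline{L_3}$ by $\overline{L_3}\cap R$ for a regular language $R$ chosen so that, on $R$, the quantity $A_N$ admits a closed-form description governed by the arithmetic of the block lengths of $x$. The template for this is exactly the rigidity already exploited in Case~2 of Theorem \ref{context} via Lemma \ref{bahnhof}: the number of states of a looping automaton, together with the requirement of accepting no \emph{other} word of the same length, forces an exact Diophantine relation between $\abs{x}$ and the period of the dominant loop. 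The aim is to reduce membership in $\overline{L_3}\cap R$ to such an exact condition, so that the witness words are simple only by virtue of a numerical coincidence rather than for a robust structural reason.

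The heart of the argument is then to exhibit, for the given $p$, a word $s\in\overline{L_3}\cap R$ with $\abs{s}\ge p$ that is \emph{fragile}: for every factorization $s=uvwxy$ with $\abs{vwx}\le p$ and $\abs{vx}\ge 1$, there is some $N$ for which $A_N\pars{uv^Nwx^Ny}=b(n_N)$, so that the pumped word is maximally complex and lands in $L_3$, the complement of $\overline{L_3}$, contradicting clause 3 of Theorem \ref{pump}. The design principle is that pumping perturbs both $\abs{s}$ and the relevant period by amounts that are generically incompatible with the Diophantine condition certifying simplicity; by the uniqueness supplied by Lemma \ref{bahnhof}, destroying that condition should force every economical loop structure to fail, driving $A_N$ up to its maximum. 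A key constraint on the construction is that the adversary controls the location of the cut $v,x$ only within a window of size $p$, while $s$ may be arbitrarily long. Hence the simplicity of $s$ must be \emph{global} rather than localized at a single planted square: a lone square can be avoided by cutting far away from it, leaving $s$ simple after pumping, so the balance must be arranged so that \emph{every} admissible pump breaks it.

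I expect the upper-bound half to be routine: producing NFAs that witness $A_N(s)\le\lfloor\abs{s}/2\rfloor$ for the chosen family is constructive, exactly as in Figures \ref{contextFig1} and \ref{contextFig}. The main obstacle is the matching \emph{lower} bound, namely certifying that the pumped words are genuinely maximally complex, $A_N\pars{uv^Nwx^Ny}=b(n_N)$. Lower bounds on nondeterministic automatic complexity require ruling out \emph{all} NFAs with fewer than $b(n_N)$ states that uniquely accept the word, and at present such certification is available only for structured families such as square-free or overlap-free words via \cite{Kjos-EJC} (the same input used in Theorem \ref{berstel}). Converting the failure of the Diophantine condition into a proof that no economical loop can exist---equivalently, extending the maximal-complexity certification of \cite{Kjos-EJC} to the whole pumped family---is the crux, and is presumably why the statement appears as Conjecture \ref{extra} rather than as a theorem. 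As a first test of the method I would carry the scheme out for the deterministic analogue, where the corresponding lower bounds are more tractable and where the companion non-\co\CFL\ result for $L_\eps$ announced in the abstract suggests that this route can in fact be pushed through.
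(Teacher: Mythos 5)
The statement you are addressing is Conjecture \ref{extra}: the paper itself offers \emph{no} proof of it, and explicitly only confirms a variant (Theorem \ref{canDo}). Your proposal is likewise not a proof but a plan, and its decisive step is exactly the one you concede is open: exhibiting a word $s\in\overline{L_3}$ such that for \emph{every} admissible factorization $s=uvwxy$ some pump satisfies $A_N\pars{uv^Nwx^Ny}=b(n_N)$. That lower-bound certification is the genuine gap, and the tools in the paper cannot close it. The only known certificate of membership in $L_3$ is square-freeness over three letters (used in Theorem \ref{berstel} via \cite{Kjos-EJC}), and it is unusable here: for $N\ge 2$ the pumped word contains the square $vv$, and for the remaining small values of $N$ no certificate is proposed at all. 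Lemma \ref{bahnhof} cannot substitute for it, because Diophantine rigidity in this paper only powers \emph{upper} bounds (it guarantees that a particular loop automaton accepts a unique word of the given length); a lower bound must rule out every NFA with fewer than $b(n_N)$ states, not merely the loop-structured ones. Note also that the proof of Theorem \ref{context} shows that for \emph{every} factorization, appropriately chosen large $N$ drives the complexity strictly below maximal, so any fragility witnesses would have to come from small, special $N$ --- a further constraint your scheme does not engage.

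Your closing fallback --- carry the scheme out for the deterministic analogue --- is in essence what the paper actually does, but the working proof (Theorem \ref{canDo}) differs from your template in both mechanism and scale. There, $s=(0^p1^p)^k$ is the simple word, and every pump is shown to push $A$ above the threshold $f(n)=n^{1/2-\eps}$; the lower bound comes not from Diophantine rigidity but from the Shallit--Wang estimate $A(0^n1^n)\ge\sqrt n-1$ (Theorem \ref{SW12}), transferred to pumped words via Lemma \ref{subword} and Lemma \ref{homomorphism}. The threshold $n^{1/2-\eps}$ is chosen precisely because $\sqrt n$ is the strongest lower-bound technology available; maximal complexity, at scale $n/2$, is far beyond it. This is why the full statement $L_3\not\in\co\CFL$ stands as a conjecture: neither your route nor any argument in the paper supplies lower bounds of the required strength for nondeterministic automatic complexity.
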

		We shall confirm a variant of Conjecture \ref{extra} in Theorem \ref{canDo}.
		To that end, we need a couple of lemmas.
		\begin{lem}\label{subword}
			Let $A$ denote deterministic automatic complexity.
			Then $A(y)\le A(xyz)$ for all words $x,y,z$.
		\end{lem}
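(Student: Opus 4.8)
The plan is to transform a minimal witnessing automaton for $xyz$ into one for $y$ by relocating the initial and accepting states to the endpoints of the segment along which $y$ is read. First I would fix a deterministic finite automaton $M$ with exactly $A(xyz)$ states that accepts $xyz$ and no other word of length $\abs{xyz}$. Since $M$ is deterministic and accepts $xyz$, there is a unique computation path on input $xyz$; let $p$ be the state reached after reading the prefix $x$, let $q$ be the state reached after reading $xy$, and let $f$ be the accepting state reached after reading all of $xyz$. Thus reading $y$ from $p$ lands in $q$, and reading $z$ from $q$ lands in the accepting state $f$.

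Next I would define $M'$ to have the same state set and the same (possibly partial) transition function as $M$, but with initial state $p$ and with $q$ as its unique accepting state. By construction $M'$ accepts $y$, and $M'$ has at most $A(xyz)$ states, so it remains only to check that $M'$ accepts no word $y' \ne y$ of length $\abs{y}$; this will give $A(y) \le A(xyz)$.

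The crux — and the step I expect to carry the real content — is this uniqueness verification, which I would handle by a \emph{substitution} argument. Suppose toward a contradiction that $M'$ accepts some $y'$ with $\abs{y'} = \abs{y}$ and $y' \ne y$. Then reading $y'$ from $p$ reaches the accepting state $q$ of $M'$. Running $M$ on $xy'z$, determinism forces it to pass through the initial state $\to p$ on $x$, then $p \to q$ on $y'$, then $q \to f$ on $z$, so $M$ accepts $xy'z$. But $\abs{xy'z} = \abs{xyz}$ and $xy'z \ne xyz$, contradicting the defining property of $M$. Hence $y$ is the only word of its length accepted by $M'$, completing the argument. The only care needed is with partial transition functions and with the degenerate cases $x = \varepsilon$, $z = \varepsilon$, or $y = \varepsilon$, all of which are accommodated by the same construction.
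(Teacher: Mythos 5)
Your proof is correct and is essentially the paper's own argument: the paper's one-line proof (``change the initial and final states of a witnessing automaton for $A(xyz)$'') is exactly your construction, with you supplying the uniqueness verification via the substitution $xy'z$ that the paper leaves implicit. No gaps; the handling of partial transition functions and the determinism-based substitution argument are exactly what is needed.
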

		\begin{proof}
			Given an automaton witnessing $A(xyz)$, we merely change the initial and final states
			to obtain an automaton witnessing an upper bound on $A(y)$.
		\end{proof}
		\begin{lem}\label{homomorphism}
			Let $A$ denote deterministic automatic complexity.
			Let $\pi:\{0,1\}^*\to\{0,1\}^*$ be an injective homomorphism with $\abs{\pi(0)} = \abs{\pi(1)}$.
			Then
			$A(x) \le A(\pi(x))$
			for each word $x$.
		\end{lem}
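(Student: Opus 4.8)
The plan is to turn a minimal automaton for $\pi(x)$ into one for $x$ with no more states, by letting each input letter drive the old automaton through an entire $\pi$-block. Let $M=(Q,\{0,1\},\delta,q_0,F)$ be a deterministic finite automaton witnessing $A(\pi(x))$, so that $M$ accepts $\pi(x)$, rejects every other word of length $\abs{\pi(x)}$, and has exactly $A(\pi(x))$ states; write $\hat\delta$ for the usual extension of $\delta$ to input strings. I would define $M'=(Q,\{0,1\},\delta',q_0,F)$ on the same states, initial state, and final states, with transition function $\delta'(q,a)=\hat\delta(q,\pi(a))$ for $a\in\{0,1\}$. Since each $\pi(a)$ is a fixed string and $M$ is deterministic, $\delta'$ is again a (possibly partial) deterministic transition function, so $M'$ is a DFA with at most $\abs{Q}=A(\pi(x))$ states.

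By construction, running $M'$ on a word $w$ drives $M$ through $\pi(w)$; that is, the extended transition function of $M'$ on input $w$ lands in the same state as $\hat\delta(q_0,\pi(w))$. Hence $M'$ accepts $w$ if and only if $M$ accepts $\pi(w)$, and in particular $M'$ accepts $x$ because $M$ accepts $\pi(x)$. It remains to check that $M'$ accepts no other word of length $\abs{x}$. Here the hypothesis $\abs{\pi(0)}=\abs{\pi(1)}$ is exactly what makes the argument go through: writing $\ell$ for this common length, every word $w$ with $\abs{w}=\abs{x}$ satisfies $\abs{\pi(w)}=\ell\abs{w}=\ell\abs{x}=\abs{\pi(x)}$. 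Thus if $M'$ accepts some $w$ with $\abs{w}=\abs{x}$, then $M$ accepts $\pi(w)$, a word of length $\abs{\pi(x)}$, and the defining property of $M$ forces $\pi(w)=\pi(x)$; injectivity of $\pi$ then yields $w=x$. Therefore $M'$ witnesses $A(x)\le A(\pi(x))$.

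The only real subtlety is the one just highlighted. Without the uniform-length assumption, a competing word $w$ of length $\abs{x}$ could map under $\pi$ to a word $\pi(w)$ whose length differs from $\abs{\pi(x)}$, on which $M$ carries no constraint, and the passage from ``$M$ rejects all other words of length $\abs{\pi(x)}$'' to ``$M'$ rejects all other words of length $\abs{x}$'' would fail. The block-reading construction handles determinism and the state bound automatically, so once uniform length is in place, the preservation of the state count together with the two acceptance conditions all follow directly.
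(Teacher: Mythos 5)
Your proposal is correct and follows essentially the same route as the paper's proof: define a new automaton on the same states, initial state, and final states, with each letter $a$ inducing the transition that $M$ makes on the block $\pi(a)$, then use $\abs{\pi(0)}=\abs{\pi(1)}$ to see that competing words of length $\abs{x}$ map to words of length $\abs{\pi(x)}$, and injectivity of $\pi$ to conclude uniqueness. Your write-up is simply a more detailed rendering of the paper's argument, including the same key observation about where the uniform-length hypothesis is used.
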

		\begin{proof}
			Let $M$ be a witnessing automaton for $A(\pi(x))$, with transition function $\delta$.
			We can now make an automaton $M'$ with the same states as $M$ (and the same initial and final states)
			that uniquely accepts $x$ among
			words of length $\abs{x}$ as follows.
			Throw out all the edges of $M$. Put an edge labeled $i$ from $q_1$ to $q_2$ in $M'$
			if $\delta(q_1,\pi(i))=q_2$ in $M$.

			It is clear that $M'$ accepts $x$. We now turn to uniqueness.

			Suppose $\abs{y}=\abs{x}$ and $M'$ accepts $y$.
			Since $\abs{\pi(0)}=\abs{\pi(1)}$, $\abs{\pi(y)}=\abs{\pi(x)}$.
			But $M$ only accepts one word of length $\abs{\pi(x)}$, so it must be that $\pi(y)=\pi(x)$.
			Since $\pi$ is injective, it follows that $y=x$.
		\end{proof}
		\begin{thm}[Shallit and Wang {\cite[Theorems 10 and 12]{MR1897300}}]\label{SW12}
			Let $A$ denote deterministic automatic complexity.
			There is a constant $n_0$ such that for $n\ge n_0$,
			\[
				\sqrt{n} - 1 \le A(0^n1^n)\le 6\sqrt n + 1.
			\]
		\end{thm}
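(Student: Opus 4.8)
The plan is to establish the two bounds separately; in each direction the crux is a number-theoretic uniqueness statement about two loop lengths, in the spirit of Lemma~\ref{bahnhof}.

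For the upper bound I would build an explicit deterministic automaton $M$ out of two disjoint directed cycles: a \emph{$0$-cycle} $q_0\to q_1\to\cdots\to q_{a-1}\to q_0$ traversed by the letter $0$, and a \emph{$1$-cycle} $r_0\to r_1\to\cdots\to r_{b-1}\to r_0$ traversed by $1$, where $a=\lceil\sqrt n\rceil$ and $b=a+1$ (so $\gcd(a,b)=1$ and $ab>n$). Reading $0^j$ lands in $q_{j\bmod a}$; I would place the unique $0$-cycle-to-$1$-cycle transition at $q_{n\bmod a}$, declare $r_{n\bmod b}$ the only accepting state, give the $0$-cycle no other $1$-edges, and give the $1$-cycle no $0$-edges. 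Then any accepted word of length $2n$ must have the form $0^j1^{2n-j}$, and acceptance forces $j\equiv n\pmod a$ together with $2n-j\equiv n\pmod b$, i.e.\ $j\equiv n$ modulo both $a$ and $b$. By the Chinese remainder theorem $j\equiv n\pmod{ab}$, and since $ab>n$ the only such $j$ in $[0,2n]$ is $j=n$. As $M$ has $a+b=2\lceil\sqrt n\rceil+1\le 6\sqrt n+1$ states, this gives the upper bound.

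For the lower bound, let $M$ be any automaton with $q$ states witnessing $A(0^n1^n)$. The run on the prefix $0^n$ visits $p_i:=\delta(\text{start},0^i)$, a sequence that is eventually periodic with some period $a\le q$ after a tail of length $<q$; likewise reading $1$'s from $p_n$ is eventually periodic with period $b\le q$. For large $n$ both $n$ and $n-ma$ lie in the respective periodic parts, so $p_{n-ma}=p_n$, and I would test the length-$2n$ words $0^{\,n-ma}1^{\,n+ma}$ for $1\le m\le (n-O(q))/a$. Such a word is accepted as soon as $ma\equiv 0\pmod b$, contradicting uniqueness; so the smallest positive solution $m=b/\gcd(a,b)$ must exceed $(n-O(q))/a$, forcing $\operatorname{lcm}(a,b)>n-O(q)$. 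Since $\operatorname{lcm}(a,b)\le ab\le q^2$, this yields $q^2+O(q)>n$ and hence $q\ge\sqrt n-1$ once $n\ge n_0$.

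The main obstacle in both directions is the uniqueness bookkeeping rather than the state count. On the upper side I must check that \emph{every} length-$2n$ word other than $0^n1^n$ is rejected: words not of the form $0^*1^*$ die because the $1$-cycle has no $0$-edges and only one $0$-cycle state emits a $1$, while the remaining $0^j1^{2n-j}$ are killed by the coprimality/CRT computation above. On the lower side the delicate point is guaranteeing that the perturbed word $0^{\,n-ma}1^{\,n+ma}$ is genuinely distinct from $0^n1^n$ yet still accepted, which is precisely where the bound $\operatorname{lcm}(a,b)>n-O(q)$, and hence $q\gtrsim\sqrt n$, is extracted.
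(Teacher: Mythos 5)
The paper never actually proves this theorem: it is imported as a black-box citation of Shallit and Wang (their Theorems 10 and 12) and used later as an ingredient in Theorem \ref{canDo}, so the comparison here is between your argument and the original one, not anything in this paper. Your proposal is essentially correct and amounts to a self-contained reconstruction of the Shallit--Wang result. Your lower bound is exactly their style of loop-pumping: the runs on the $0$-block and the $1$-block are eventually periodic with tails $<q$ and periods $a,b\le q$, and if $\mathrm{lcm}(a,b)\le n-q$ one could trade $\mathrm{lcm}(a,b)$ zeros for the same number of ones, producing a second accepted word of length $2n$; hence $q^2\ge ab\ge\mathrm{lcm}(a,b)>n-q$, which gives $q\ge\sqrt n-1$. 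Your upper bound is actually cleaner and quantitatively better than the cited $6\sqrt n+1$: placing the unique crossing edge at $q_{n\bmod a}$, accepting at the residue $n\bmod b$, and taking $b=a+1$ (so $\gcd(a,b)=1$ and $ab>n$) lets the Chinese remainder theorem eliminate every competitor $0^j1^{2n-j}$, yielding roughly $2\sqrt n$ states. Two small repairs are needed. First, deterministic automatic complexity in the sense of Shallit--Wang uses total DFAs, so ``the automaton dies'' is not available: you must route all unspecified transitions to a single absorbing, non-accepting dead state, costing one extra state and leaving the bound intact. Second, in the lower bound you should record the harmless reduction to the case $q\le n$ (indeed $q<\sqrt n$, since otherwise there is nothing to prove); this is what guarantees that both runs actually enter their cycles, that $n$ lies in the periodic parts, and that $a$, $b$, and the tail lengths exist at all. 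With those patches your proof is complete.
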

		\begin{thm}\label{canDo}
			Let $\eps>0$, $f(x)=x^{1/2-\eps}$, and
			\[
				S = \{x \in \{0,1\}^* : A(x) < f(\abs{x})\}.
			\]
			Then $S\not\in\CFL$.
		\end{thm}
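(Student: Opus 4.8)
The plan is to deduce the result from the structure of \emph{bounded} context-free languages rather than from a single application of the pumping lemma. Suppose, for contradiction, that $S\in\CFL$. Since $\CFL$ is closed under intersection with regular languages, the bounded language $T:=S\cap 0^*1^*$ is in $\CFL$. Identifying $0^a1^b$ with the pair $(a,b)$, membership in $T$ is governed entirely by the Parikh image $P(T)=\{(a,b):0^a1^b\in T\}$, and I would aim to show that $P(T)$ is \emph{not} semilinear, contradicting Parikh's theorem.

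To locate $P(T)$ I would sandwich the complexity $A(0^a1^b)$ between two powers of $n:=a+b$, writing $m:=\min(a,b)$. For the lower bound, $0^m1^m$ is a factor of $0^a1^b$, so Lemma~\ref{subword} and Theorem~\ref{SW12} give $A(0^a1^b)\ge A(0^m1^m)\ge\sqrt m-1$ once $m\ge n_0$. For the upper bound I would build an explicit automaton with one $0$-tail of length $m$ followed by a single accepting state carrying a $1$-self-loop (when $a\le b$; symmetrically otherwise), which uniquely accepts $0^a1^b$ among length-$n$ words and shows $A(0^a1^b)\le m+1$; Lemma~\ref{bahnhof} and Lemma~\ref{homomorphism} are on hand should a sharper or length-rescaled version of this construction be needed. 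Fixing $\eps'>\eps>\eps''>0$, these bounds yield, for all large $n$, the sandwich $\{(a,b):m\le n^{1/2-\eps'}\}\subseteq P(T)$ and $P(T)\cap\{(a,b):m\ge n^{1-2\eps''}\}=\varnothing$, since $m\le n^{1/2-\eps'}$ forces $A\le m+1<n^{1/2-\eps}=f(n)$, while $m\ge n^{1-2\eps''}$ forces $A\ge\sqrt m-1\ge n^{1/2-\eps}$.

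The contradiction would then come from a column analysis. For each fixed $a=k$, the inclusion $\{m\le n^{1/2-\eps'}\}\subseteq P(T)$ shows that $(k,b)\in P(T)$ for every $b\ge k^{2/(1-2\eps')}$; thus every column of $P(T)$ is eventually full. If $P(T)$ were semilinear, a short linear-programming argument over the generators of its linear pieces would bound the threshold $\beta(k):=\min\{b_0:(k,b)\in P(T)\text{ for all }b\ge b_0\}$ by a linear function $c_1k+c_0$. On the other hand, the disjointness $P(T)\cap\{m\ge n^{1-2\eps''}\}=\varnothing$ means that on the column $a=k$ the whole band $b\in[k,\,k^{1/(1-2\eps'')}-k]$ is excluded from $P(T)$, forcing $\beta(k)\ge k^{1/(1-2\eps'')}-k$. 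Since $1/(1-2\eps'')>1$, this lower bound is super-linear and contradicts $\beta(k)\le c_1k+c_0$ for large $k$. Hence $P(T)$ is not semilinear, $T\notin\CFL$, and therefore $S\notin\CFL$.

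I expect the main obstacle to be precisely that the ordinary pumping lemma (Theorem~\ref{pump}) is too weak here: given any $0^a1^b\in T$ an adversary may pump inside the longer block, leaving $m$ unchanged and staying in $T$, so no single word can be pumped out of $T$. This is why the argument must pass through the \emph{global} semilinearity of the Parikh image rather than a local pumping step, and the technical heart is the quantitative gap between the exponents $1/2-\eps'$ and $1-2\eps''$, which guarantees a genuinely super-linear excluded band that no finite union of linear sets can straddle while still filling every column.
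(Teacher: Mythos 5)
Your proposal is correct in outline, but it takes a genuinely different route from the paper's. The paper stays entirely within the pumping lemma (Theorem~\ref{pump}): it pumps $s=(0^p1^p)^k$, which lies in $S$ because its periodic structure gives $A(s)\le 2p+1$, and shows in two cases (the pumped part $vx$ all one symbol, or mixed) that a suitably pumped word $uv^Nwx^Ny$ contains a long subword which is, up to the block-renaming provided by Lemma~\ref{homomorphism}, of the form $0^M1^M$; Lemma~\ref{subword} and Theorem~\ref{SW12} then push its complexity above $f$. You bypass pumping entirely: closure of $\CFL$ under intersection with regular sets plus Parikh's theorem reduce the problem to showing that $P(T)=\{(a,b):A(0^a1^b)<f(a+b)\}$ is not semilinear, which you get from the same lower-bound ingredients (Lemma~\ref{subword}, Theorem~\ref{SW12}) together with the easy upper bound $A(0^a1^b)\le\min(a,b)+2$ (your tail-plus-loop automaton; add a dead state if the DFA must be total --- the constant is immaterial). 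Your closing remark is right about your language $T$ --- inside $0^*1^*$ pumping can never yield a contradiction --- but it should not be read as saying the pumping lemma cannot prove the theorem: the paper does exactly that, by pumping words outside $0^*1^*$. What your route buys is the elimination of the case analysis, of Lemma~\ref{homomorphism}, and of the delicate bookkeeping among $p$, $k$, $N$; what it costs is heavier machinery.

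The one step you should not leave vague is the claim that semilinearity forces the thresholds $\beta(k)$ to be linearly bounded; it is true, and the clean argument goes through complementation rather than ``a short linear-programming argument.'' Semilinear subsets of $\mathbb{N}^2$ are closed under complement (Ginsburg--Spanier), so write the complement of $P(T)$ as $\bigcup_j M_j$ with $M_j$ linear, base $u_j$ and periods $(d_{j,l},e_{j,l})$. Since every column of $P(T)$ is eventually full, every column of the complement is finite, so no $M_j$ has a nonzero period with $d_{j,l}=0$; hence all periods have $d_{j,l}\ge 1$, and any $(k,b)\in M_j$ satisfies $b\le u_{j,2}+k\max_l e_{j,l}$. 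Thus the whole complement lies under a line $b\le c_1k+c_0$, while your excluded band places points of the complement at height roughly $k^{1/(1-2\eps'')}$, which is superlinear since $0<\eps''<\eps<1/2$. That is the contradiction, stated rigorously; with it, your argument is complete.
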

		\begin{proof}[Proof]
			We assume $S\in \CFL$ and derive a contradiction using the pumping lemma (Theorem \ref{pump}).
			Let $p$ be any sufficiently large pumping length.
			(The meaning of ``sufficiently large'' is determined below.)
			We shall build our unpumpable word as $s = r^k$ where
			\[
				r = 0^{p}1^{p}
			\]
			and $k$ is sufficiently large relative to $p$.
			Consider any decomposition of $s$ as $s = uvwxy$ where $\abs{vwx}\le p$ and $\abs{vx}\ge 1$.
			The main idea of the proof is the combination of the following two facts.
			\begin{itemize}
				\item $r$ is the shortest contiguous subword $R$ of $s$ such that the simplicity of $s$
				comes from repeating $R$.
				\item the ``pumpable part'' $vwx$ of $s$ is shorter than $r$.
			\end{itemize}
			 This means that pumping cannot help but
			increase the complexity. Thus by choosing $k$ wisely we will have $X_1\in S$ and
			$X_N := uv^Nwx^Ny\not\in S$ for some $N$, which will be a contradiction.

			The details are as follows.
			\begin{itemize}
				\item Case 1: $vx$ is all 0s.
					(We omit the case when $vx$ is all 1s as the proof is identical.)
					Then since $\abs{vwx}\le p$ and $\abs{vx}\ge 1$, the word $vwx$ is also all 0s.
				 	Let $N=pk$. We have $0^N={(0^p0^p)}^{k/2}$, and $v^Nwx^N$ contains $0^N$ as a contiguous subword.
					Now either
					at least half of the occurrences of $1^p$ as contiguous subwords of $X_N$ are in $u$, or
					at least half of them are in $y$.
					Hence $X_N$ contains a contiguous subword of
					the form either
					$(0^p0^p)^{k/2}(1^p0^p)^{k/2}$
					or
					$(0^p1^p)^{k/2}(0^p0^p)^{k/2}$.
					Then using the inequality $1\le\abs{vx}\le p$,
					\begin{equation}\label{3pk}
						3pk = n_1 + pk  = n_1 + N \le n_N
					\end{equation}
					and
					\begin{equation}\label{n_N}
						n_N = \abs{u v^N w x^N y}
						\le n_1 + Np = n_1 + p^2k = 2pk + p^2k.
					\end{equation}
					Now,
					\begin{eqnarray*}
						A(X_N)\ge &A(0^{k/2}1^{k/2})&\quad\text{by Lemma \ref{subword} and Lemma \ref{homomorphism}}\\
						 \ge &\sqrt{k/2} - 1&\quad\text{by Theorem \ref{SW12}}\\
						 \ge &\sqrt{n_N/(4p+2p^2)} -1 &\quad\text{by (\ref{n_N})}\\
						\ge &f(n_N)&
					\end{eqnarray*}
					provided
					\[
						(f(n_N)+1)^2/n_N\le \frac1{4p+2p^2}.
					\]
					Since $f$ is monotonically increasing, by (\ref{n_N}) it suffices that
					\[
						\frac{(f(2pk+p^2k)+1)^2}{n_N}\le \frac1{4p+2p^2}.
					\]
					For this, by (\ref{3pk}) it suffices that
					\[
						\frac{(f(2pk+p^2k)+1)^2}{3pk}\le \frac1{4p+2p^2}.
					\]
					In other words,
					\[
						\frac{(f((2p+p^2)k)+1)^2}{k}\le \frac{3p}{4p+2p^2} = \frac3{4+2p}.
					\]
					This is true for large enough $k$, since
					\[
						f(x) = o(\sqrt{x}).
					\]
					In order to guarantee that $A(X_1) < f(n_1)$,
					we require
					\[
						A(X_1) \le 2p+1 < f(n_1) = f(2pk).
					\]
					Since $f(x)\to\infty$ as $x\to\infty$, this holds by taking $k$ large enough.
				\item Case 2: $vx$ contains both 0s and 1s.
					Then $uv^Nwx^Ny$ contains
					a contiguous subword of the form $0^{N}1^{N}$
					(using the fact that the blocks $0^p1^p$ in $s$ are longer than the pumping length).
					The analysis is then similar to Case 1 (but without using Lemma \ref{homomorphism}).
			\end{itemize}
		\end{proof}
	\section{Automatic complexity of equivalence relations}
		We now go higher in the complexity-theoretic hierarchy, from $\CFL$ to $\NP$.
		We shall not be able to determine the $\NP$-completeness, or lack thereof, of problems like ``is $A(x)\le c$?''
		Nevertheless, we obtain results for a generalization of automatic complexity.
		\begin{df}\label{coheres with}
			Given a deterministic finite automaton (DFA)
			\[
				M=(Q,\Sigma,\delta,q_0,F),
			\]
			an equivalence relation $D$ on $Q$ \emph{induces}
			an equivalence relation $E$
			on a subset $S$ of $\{0,1\}^*$ if
			\[
				E = \{(x,y)\in S^2 \mid (\delta(q_0,x),\delta(q_0,y))\in D\}.
			\]

			A deterministic finite automaton
			$M=(Q,\Sigma,\delta,q_0,F)$
			\emph{coheres with} $E$ if
			there is an equivalence relation $D$ on $Q$ such that
			$D$ induces $E$.
		\end{df}
		In words, if $D$ induces $E$ then two words $x,y\in S$ are $E$-equivalent iff
		$M$ ends in $D$-equivalent states on input $x$ and on input $y$.

		Note that the set of final states $F$ is irrelevant in Definition \ref{coheres with}.
		\begin{df}\label{AE}
			The automatic complexity $A(E)$ of
			an equivalence relation $E$ is
			the least number of states of a DFA
			that coheres with $E$.
		\end{df}
		\begin{rem}
			Automatic complexity of a word $x$ (Shallit and Wang \cite{MR1897300})
			is a special case of automatic complexity of equivalence relations.
			Namely, the two equivalence classes are $\{x\}$ and
			$\{y: \abs{y}=\abs{x}, y\ne x\}$.
		\end{rem}

		\subsection{Complexity of equivalence relations is $\BH_2$-complete}
			As usual, let us say that a Boolean formula is \emph{CNF} if it is in conjunctive normal form, i.e., it is a conjunction of clauses,
			each of which is a disjunction of literals.
			\begin{df}[Encoding of literals]
				For a variable $x_j$, we denote the negation of $x_j$ by $\overline{x_j}$.
				We define
				\begin{eqnarray*}
					\neg^0x_j &=& x_j,\\
					\neg^1 x_j &=& \overline{x_j}.
				\end{eqnarray*}
				For a literal $l = \neg^b x_j$, where $b\in\{0,1\}$, we define the encoding word
				\[
					t(l) = 1^j0b0.
				\]
			\end{df}
			\begin{df}[inspired by {\cite[Victor Kuncak's solution to Exercise 7.36]{Sipser:13}}]\label{KuncakDef}
				Let $\phi$ be a CNF formula with $m$ clauses.
				Let
				\[
					Q_m = \{
						q_0,
						q_1,
						\dots,
						q_m,
						h,
						v_t,
						v_f,
						l_t,
						l_f,
						r,
						s
					\}
				\]
				be a set of cardinality $m+8$.
				For each $\sigma\in\{0,1\}^*$ and $q\in Q_m$,
				\[
					\sigma \to q
				\]
				is the ordered pair $\langle\sigma, q\rangle$.\footnote{See Remark \ref{https} for intuition.}
				Let $S$ be the following set, where $1^0=0^0=\lambda$, the empty word.
				\begin{eqnarray*}
					S &=& \{
					1^{m+1}\to q_0, 0\to h,\\
					&&00\to v_t, 01\to v_f,
					000\to l_t, 001\to l_f,
					010\to l_f, 011\to l_t,\\
					&&0^200\to s, 0^201\to r, 0^210\to q_0, 0^211\to r,\\
					&&0^300\to s, 0^301\to s, 0^310\to q_0, 0^311\to q_0\}\\
					&\cup&\{1^i\to q_i : 0\le i\le m\}\\
					&\cup&\{1^i011\to r : 1\le i\le m\}\\
					&\cup& \{t(l_1)t(l_2)t(l_3)\to s : (l_1\vee l_2\vee l_3)\text{ is a clause of }\phi\}.
				\end{eqnarray*}
				Let $E_\phi$ be the intersection of all equivalence relations containing
				\[
					\{
						(\sigma,\tau): (\exists q\in Q_m)
						((\sigma\to q)\in A\text{ and }(\tau\to q)\in S)
					\}.
				\]
			\end{df}
			\begin{rem}\label{https}
				The elements of $Q_m$ in Definition \ref{KuncakDef} are thought of as \emph{states}.
				The expression $\sigma\to q$ is to be thought of as the statement that $\delta(q_0,\sigma)=q$ where $\delta$ is the transition function of a DFA $M$ and $q_0$
				is the initial state.
				The equivalence relation $E_\phi$ identifies two words as equivalent if they lead us to the same state.
				Thus such an $M$ will cohere with $\phi$.
			\end{rem}
			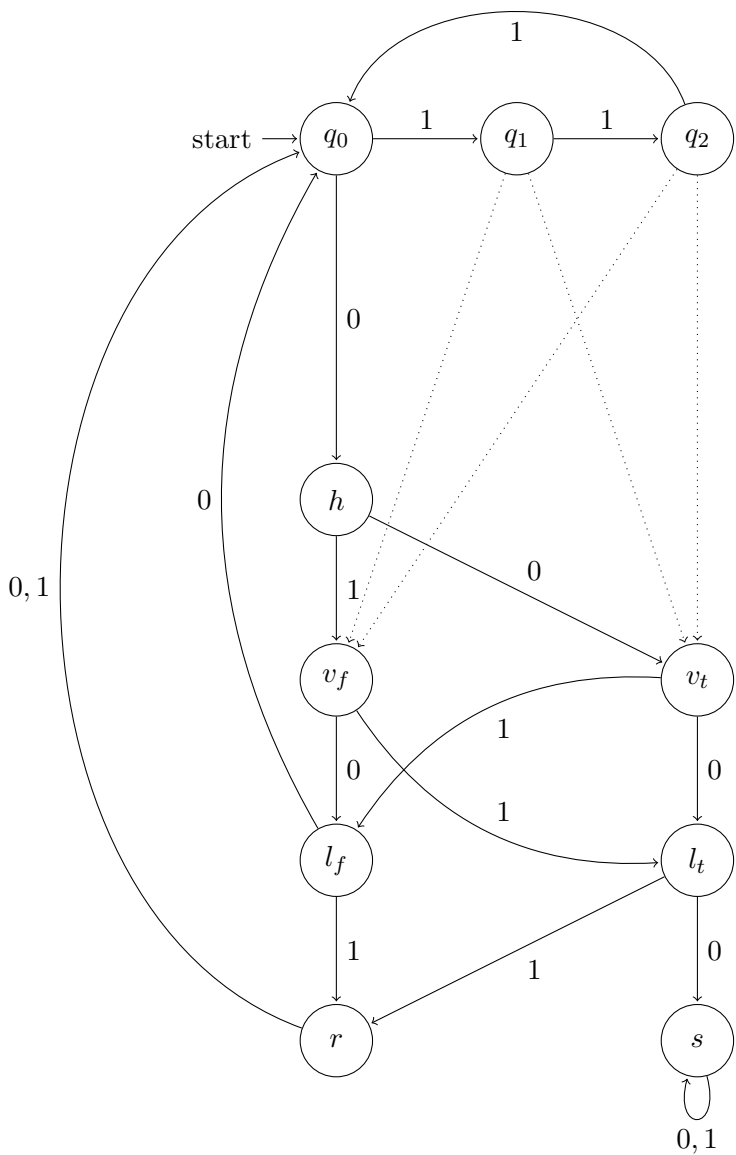
\begin{figure}
				\centering
				\begin{tikzpicture}[shorten >=1pt,node distance=2.4cm,on grid,auto]
					\node[state,initial]   (q_0)                {$q_0$}; 
					\node[state]           (q_1) [right of=q_0] {$q_1$}; 
					\node[state]           (q_2) [right of=q_1] {$q_2$}; 
					\node (j) [below of=q_0] {};
					\node[state]           (h)   [below of=j] {$h$};
					\node[state]           (v_f) [below of=h]   {$v_f$};
					\node                  (i)   [right of=v_f] {};
					\node[state]           (v_t) [right of=i]   {$v_t$};
					\node[state]           (l_f) [below of=v_f] {$l_f$};
					\node[state]           (l_t) [below of=v_t] {$l_t$};
					\node[state]           (r)   [below of=l_f] {$r$};
					\node[state]           (s)   [below of=l_t] {$s$};

					\path[->]
						(q_0) edge                 node {$1$}   (q_1)
						(q_1) edge                 node {$1$}   (q_2)
						(q_2) edge [bend right=70] node {$1$}   (q_0)
						(q_0) edge                 node {$0$}   (h)
						(h)   edge                 node {$1$}   (v_f)
						(h)   edge                 node {$0$}   (v_t)
						(v_f) edge                 node {$0$}   (l_f)
						(v_f) edge [bend right]    node {$1$}   (l_t)
						(v_t) edge [bend right]    node {$1$}   (l_f)
						(v_t) edge                 node {$0$}   (l_t)
						(l_f) edge [bend left]     node {$0$}   (q_0)
						(l_f) edge                 node {$1$}   (r)
						(r)   edge [bend left=70]  node {$0,1$} (q_0)
						(s)   edge [loop below]    node {$0,1$} (s)
						(l_t) edge                 node {$1$}   (r)
						(l_t) edge                 node {$0$}   (s)
						(q_1) edge [dotted]        node {}      (v_f)
						(q_1) edge [dotted]        node {}      (v_t)
						(q_2) edge [dotted]        node {}      (v_f)
						(q_2) edge [dotted]        node {}      (v_t);
				\end{tikzpicture}
				\caption{
					The automaton $M'$ from the proof of Theorem \ref{NPthm} is given by the solid lines.
					Appropriate choice of two of the dotted lines gives the total DFA $M$.
					The case where the formula $\phi$ has $m=2$ clauses is shown.
				}\label{benevolence}
			\end{figure}

			\begin{thm}\label{NPthm}
				$\{E : A(E) = \abs{E}\}$ is \NP-complete.
			\end{thm}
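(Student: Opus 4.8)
The plan is to establish membership in \NP\ from a cheap lower bound and then \NP-hardness by the reduction $\phi\mapsto E_\phi$ of Definition \ref{KuncakDef}, taking $3$-satisfiability as the source problem.

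The first step is the observation that $A(E)\ge\abs{E}$ for every equivalence relation $E$, where $\abs{E}$ is its number of classes. Indeed, if a DFA $M$ coheres with $E$ through an equivalence relation $D$ on its state set $Q$, then $x\mapsto[\delta(q_0,x)]_D$ descends, by the biconditional of Definition \ref{coheres with}, to a \emph{well-defined injection} from the $E$-classes into the $D$-classes; hence $\abs{Q}\ge\abs{E}$. Consequently $A(E)=\abs{E}$ is equivalent to the existence of a cohering DFA with exactly $\abs{E}$ states, and this is an \NP\ predicate: guess a transition table on $\abs{E}$ states together with an equivalence relation $D$ on them, simulate each word of the finite domain to find the state it reaches, and verify the defining biconditional over all pairs. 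This runs in time polynomial in the size of $E$.

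For hardness I would first record that the construction of Definition \ref{KuncakDef} is polynomial-time computable and that $\abs{E_\phi}=m+8=\abs{Q_m}$: a routine check shows that $S$ assigns each listed word a unique state and that every one of the $m+8$ states of $Q_m$ is named, so the classes of $E_\phi$ are exactly the fibers of this naming. Because $\abs{E_\phi}$ already equals $\abs{Q_m}$, the injection above forces any witnessing DFA to have exactly $m+8$ states, to take $D$ to be the identity, and to admit a relabeling of its states by $Q_m$ under which the class named $q$ is precisely the set of words reaching $q$ — with $q_0$ initial, since $\lambda\to q_0$. A direct check then shows that $S$ forces every transition of $M$ to coincide with the corresponding solid edge of Figure \ref{benevolence}, the only unforced transitions being $\delta(q_i,0)$ for $1\le i\le m$; the words $1^i011\to r$ confine each of these to $\{v_t,v_f\}$, so that a minimal cohering DFA is exactly the automaton of Figure \ref{benevolence} equipped with a choice of the dotted edges — a truth assignment $\alpha$ with $\alpha(x_i)$ true iff $\delta(q_i,0)=v_t$.

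The heart of the reduction, and the step I expect to require the most care, is checking that the literal gadget computes truth. Reading $t(\neg^b x_j)=1^j0b0$ from $q_0$ moves to $q_j$ and then branches on $\delta(q_j,0)\in\{v_t,v_f\}$, i.e.\ on $\alpha(x_j)$; a short case analysis shows the run ends in the sink $s$ when the literal is true under $\alpha$ and returns to $q_0$ when it is false. Since $s$ absorbs all further input and $q_0$ resets the gadget, the clause word $t(l_1)t(l_2)t(l_3)$ ends at $s$ exactly when $\alpha$ satisfies the clause, whereas $S$ demands that it reach $s$. Hence a cohering DFA with $m+8$ states exists iff some $\alpha$ satisfies every clause, i.e.\ iff $\phi$ is satisfiable; the reverse direction — from a satisfying $\alpha$, set the dotted edges accordingly and verify that the resulting automaton induces exactly $E_\phi$ — is then routine. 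The main obstacle is thus the combinatorial bookkeeping: confirming that the only freedom in a minimal cohering DFA is the choice of dotted edges, and that the sink/reset dynamics make the $m$ clause constraints equivalent to the satisfiability of $\phi$.
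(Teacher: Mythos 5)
Your proposal is correct and takes essentially the same approach as the paper: the identical reduction $\phi\mapsto E_\phi$ of Definition \ref{KuncakDef}, with satisfiability corresponding exactly to the choices of the dotted edges $\delta(q_i,0)\in\{v_t,v_f\}$ in the automaton of Figure \ref{benevolence}. The only difference is that you spell out details the paper treats as immediate or leaves to the figure --- membership in \NP, the inequality $A(E)\ge\abs{E}$, the count $\abs{E_\phi}=m+8$, and the forcing of the solid edges via the listed words --- all of which check out.
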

			\begin{proof}
				It is immediate from the definitions that
				\[
					\{E : A(E) = \abs{E}\} = \{E : A(E) \le \abs{E}\}.
				\]
				We reduce 3-\SAT\ to $\{E : A(E) \le \abs{E}\}$ using the mapping $\phi\mapsto E_\phi$ from
				Definition \ref{KuncakDef}.
				It induces a finite automaton $M'$ which is deterministic but whose transition function $\delta'$ is not total (Figure \ref{benevolence}).
				We see that $\phi$ is satisfiable iff there is a total DFA $M$, differing from $M'$ only in that its transition function $\delta\supseteq\delta'$
				is total, such that $M$ coheres with $E_\phi$. In particular $M$ has no more states than $M'$.
				The possible extra transitions of $M$ are shown in dotted lines in Figure \ref{benevolence}.
				Thus
				\begin{align*}
				 	\phi\text{ is satisfiable}  \quad &\Longrightarrow\quad A(E_\phi) \le \abs{E_\phi},\\
					\phi\text{ is unsatisfiable}\quad &\Longrightarrow\quad A(E_\phi) \not\le \abs{E_\phi}.
				\end{align*}
			\end{proof}
			\begin{thm}\label{malevolentThm}
				$\{E: A(E) = \abs{E} + 1\}$ is \coNP-hard.
			\end{thm}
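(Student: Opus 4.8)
The plan is to reduce the \coNP-complete unsatisfiability problem to $\{E : A(E) = \abs{E}+1\}$ using the \emph{same} map $\phi\mapsto E_\phi$ from Definition \ref{KuncakDef}. The proof of Theorem \ref{NPthm} already supplies two implications: if $\phi$ is satisfiable then $A(E_\phi)\le\abs{E_\phi}$, and if $\phi$ is unsatisfiable then $A(E_\phi)\not\le\abs{E_\phi}$, i.e.\ $A(E_\phi)>\abs{E_\phi}$. In the satisfiable case this already gives $A(E_\phi)\le\abs{E_\phi}<\abs{E_\phi}+1$, so $E_\phi\notin\{E:A(E)=\abs{E}+1\}$. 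Since $A$ is integer-valued, it remains only to prove the single \emph{uniform upper bound}
\[
	A(E_\phi)\le\abs{E_\phi}+1\qquad\text{for every CNF formula }\phi;
\]
for then in the unsatisfiable case the inequalities $\abs{E_\phi}<A(E_\phi)\le\abs{E_\phi}+1$ pin down $A(E_\phi)=\abs{E_\phi}+1$ exactly.

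The heart of the argument is therefore to exhibit, for an \emph{arbitrary} $\phi$ with $m$ clauses (after harmlessly padding $\phi$ with tautological clauses we may assume every variable index $j$ satisfies $1\le j\le m$, so that $1^{j}$ leads from $q_0$ to $q_j$), a DFA on the $\abs{E_\phi}+1 = m+9$ states $Q_m\cup\{z\}$ that coheres with $E_\phi$, where $z$ is one new state. The idea is to \emph{defeat the clause test} by short-circuiting every literal to the self-looping sink $s$. Concretely I would keep all the solid transitions of the partial automaton $M'$ of Figure \ref{benevolence}, set each hitherto-undefined variable transition to $q_j\xrightarrow{0}z$ for $1\le j\le m$ (leaving $q_0\xrightarrow{0}h$ as in $M'$), and direct both transitions out of $z$ to $l_t$, i.e.\ $z\xrightarrow{0}l_t$ and $z\xrightarrow{1}l_t$. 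The crucial feature is that $z$ inserts exactly one extra step before $l_t$: reading $0b0$ from any $q_j$ now traverses $q_j\to z\to l_t\to s$ (using $l_t\xrightarrow{0}s$), so every literal $t(l)$ reaches $s$ and hence every clause word $t(l_1)t(l_2)t(l_3)$ reaches $s$ regardless of satisfiability; meanwhile reading $011$ from $q_j$ traverses $q_j\to z\to l_t\to r$ (using $l_t\xrightarrow{1}r$), so each word $1^i011\to r$ still lands on $r$. One checks directly that the remaining words of $S$, namely the $1^i\to q_i$ and all words beginning with $0$, never use a rerouted edge and still reach their designated states. Taking $D$ to be equality on $Q_m\cup\{z\}$, this total DFA coheres with $E_\phi$, yielding the bound.

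I expect the verification above to be the main obstacle, and in particular the reason a \emph{genuinely new} state $z$ is required: routing $q_j\xrightarrow{0}l_t$ directly would make $1^i011$ traverse $q_j\to l_t\to r\to q_0$ and thus map to $q_0$ rather than $r$, breaking coherence, and no existing state has both of its outgoing transitions leading to $l_t$, so $z$ cannot be identified with any state of $Q_m$. This optimality dovetails with Theorem \ref{NPthm}, which guarantees that $m+8$ states do \emph{not} suffice when $\phi$ is unsatisfiable, so $m+9$ is optimal there. Finally, since $\phi\mapsto E_\phi$ is computable in polynomial time, the equivalence ``$\phi$ is unsatisfiable $\iff A(E_\phi)=\abs{E_\phi}+1$'' is a polynomial-time many-one reduction from unsatisfiability, establishing that $\{E:A(E)=\abs{E}+1\}$ is \coNP-hard.
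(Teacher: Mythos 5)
Your proposal is correct and is essentially the paper's own proof: the paper likewise reuses the reduction $\phi\mapsto E_\phi$ from Theorem \ref{NPthm} and establishes the unconditional bound $A(E_\phi)\le\abs{E_\phi}+1$ by adding a single new state (called $e$ there, $z$ in your write-up) that receives the $0$-transitions from each $q_j$, $1\le j\le m$, and sends both of its outgoing transitions to $l_t$, so that clause words reach $s$ and the words $1^i011$ reach $r$ regardless of satisfiability. Your additional observations (the padding of variable indices, and the argument that the new state cannot be identified with an existing one) are harmless elaborations of the same construction.
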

			\begin{proof}
				It suffices to use the same reduction as in Theorem \ref{NPthm} and demonstrate unconditionally,
				i.e., without any assumption on satisfiability of $\phi$ or lack thereof, that
				\begin{align*}
					A(E_\phi) \le \abs{E_\phi} + 1.
				\end{align*}
				The question is then how to add one more state to Figure \ref{benevolence} to make the resulting automaton $M^+$ cohere with $E_\phi$.
				This is indicated in Figure \ref{malevolence}. We ensure $1^j011\rightarrow r$, i.e., $\delta(q_0, 1^j011) = r$, $1\le j\le m$
				using a new state $e$.
			\end{proof}
			In the proof of Theorem \ref{malevolentThm}, the state $e$ is acting duplicitously, in a sense, copying some of the behavior of the ``truth values'' $v_t$
			and $v_f$ without committing to a truth value.
			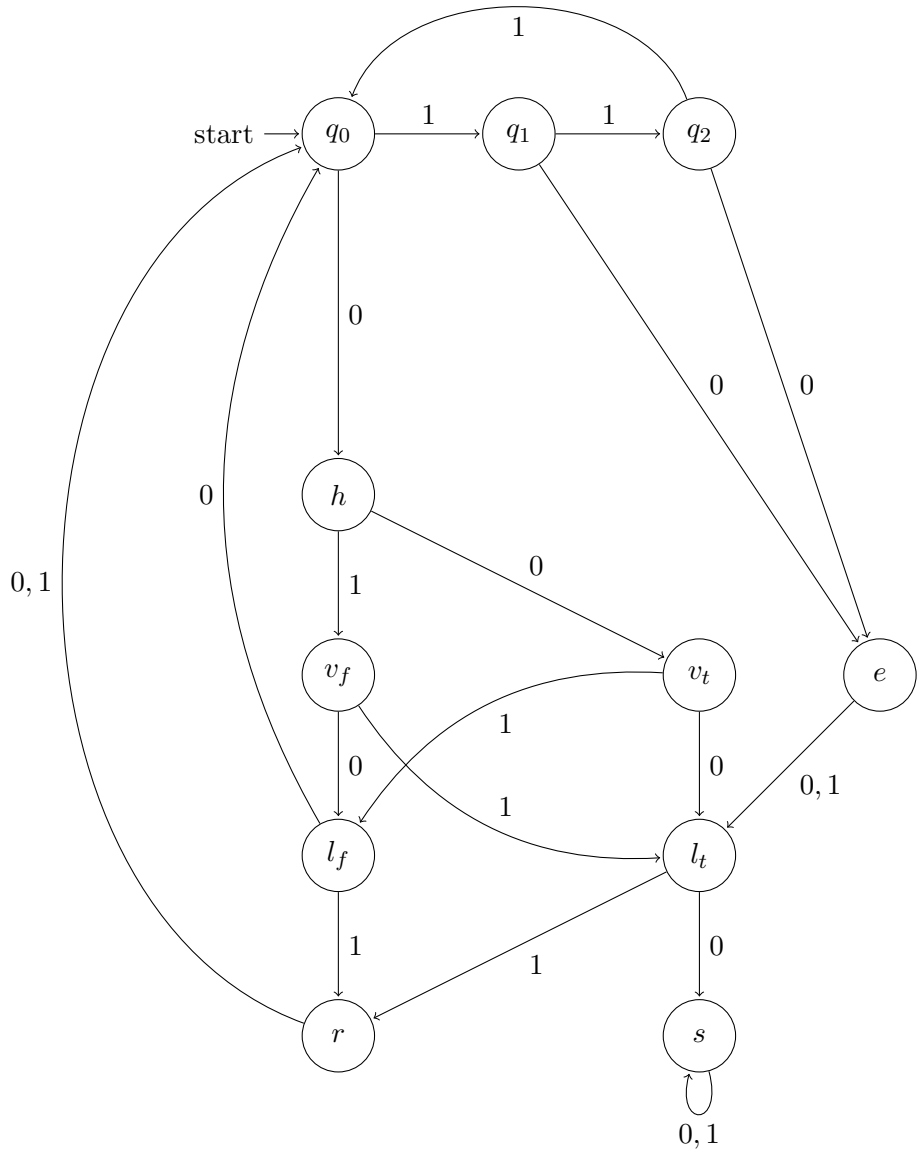
\begin{figure}
				\centering
				\begin{tikzpicture}[shorten >=1pt,node distance=2.4cm,on grid,auto]
					\node[state,initial]   (q_0)                {$q_0$}; 
					\node[state]           (q_1) [right of=q_0] {$q_1$}; 
					\node[state]           (q_2) [right of=q_1] {$q_2$}; 
					\node (j) [below of=q_0] {};
					\node[state]           (h)   [below of=j] {$h$};
					\node[state]           (v_f) [below of=h]   {$v_f$};
					\node                  (i)   [right of=v_f] {};
					\node[state]           (v_t) [right of=i]   {$v_t$};
					\node[state]           (e)   [right of=v_t] {$e$};
					\node[state]           (l_f) [below of=v_f] {$l_f$};
					\node[state]           (l_t) [below of=v_t] {$l_t$};
					\node[state]           (r)   [below of=l_f] {$r$};
					\node[state]           (s)   [below of=l_t] {$s$};

					\path[->]
						(q_0) edge                 node {$1$}   (q_1)
						(q_1) edge                 node {$1$}   (q_2)
						(q_2) edge [bend right=70] node {$1$}   (q_0)
						(q_0) edge                 node {$0$}   (h)
						(h)   edge                 node {$1$}   (v_f)
						(h)   edge                 node {$0$}   (v_t)
						(v_f) edge                 node {$0$}   (l_f)
						(v_f) edge [bend right]    node {$1$}   (l_t)
						(v_t) edge [bend right]    node {$1$}   (l_f)
						(v_t) edge                 node {$0$}   (l_t)
						(l_f) edge [bend left]     node {$0$}   (q_0)
						(l_f) edge                 node {$1$}   (r)
						(r)   edge [bend left=70]  node {$0,1$} (q_0)
						(s)   edge [loop below]    node {$0,1$} (s)
						(l_t) edge                 node {$1$}   (r)
						(l_t) edge                 node {$0$}   (s)
						(q_1) edge                 node {$0$}   (e)
						(q_2) edge                 node {$0$}   (e)
						(e)   edge                 node {$0,1$} (l_t);
				\end{tikzpicture}
				\caption{
					Automaton $M^+$ used in Theorem \ref{malevolentThm}. At the cost of adding a state $e$,
					we ensure that $M^+$ coheres with $E_\phi$, whether or not $\phi$ is satisfiable.
				}\label{malevolence}
			\end{figure}
			\begin{df}[Wechsung \cite{MR821265}]
				The first two levels of the Boolean hierarchy for \NP\ are given by
				\[
					\BH_1 = \NP,
				\]
				\[
					\BH_2 = \{ L_1 \setminus L_2 : L_1, L_2 \in \NP\}.
				\]
			\end{df}
			\begin{df}
				\[
					\SAT(2) = \{(\phi_1,\phi_2) : \phi_1\text{ is satisfiable and }\phi_2\text{ is not}\}.
				\]
			\end{df}
			\begin{thm}\label{cai}
				$\SAT(2)$ is complete for $\BH_2$ with respect to polynomial-time many-one reductions.
			\end{thm}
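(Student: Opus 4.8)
The plan is to establish completeness in the usual two halves: first membership $\SAT(2)\in\BH_2$, then $\BH_2$-hardness under polynomial-time many-one reductions.

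For membership I would display $\SAT(2)$ directly as a difference of two \NP\ sets. Put
\[
	L_1 = \{(\phi_1,\phi_2) : \phi_1\text{ is satisfiable}\},
	\qquad
	L_2 = \{(\phi_1,\phi_2) : \phi_2\text{ is satisfiable}\}.
\]
Each lies in \NP, since a nondeterministic machine can guess a satisfying assignment for the relevant coordinate and ignore the other. By the definition of $\SAT(2)$ we have $\SAT(2) = L_1\setminus L_2$, so $\SAT(2)\in\BH_2$ immediately.

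For hardness, let $L\in\BH_2$ be arbitrary, say $L = A\setminus B$ with $A,B\in\NP$. Because \SAT\ is \NP-complete, Cook's theorem supplies polynomial-time many-one reductions $f$ from $A$ to \SAT\ and $g$ from $B$ to \SAT, so that $x\in A \iff f(x)$ is satisfiable and $x\in B \iff g(x)$ is satisfiable. I would then define $h(x) = (f(x),g(x))$, which is polynomial-time computable since both $f$ and $g$ are. The decisive point is that the two coordinates of $\SAT(2)$ decouple the two membership tests, giving
\[
	x\in L
	\iff \bigl(x\in A\ \text{and}\ x\notin B\bigr)
	\iff \bigl(f(x)\in\SAT\ \text{and}\ g(x)\notin\SAT\bigr)
	\iff h(x)\in\SAT(2).
\]
Hence $h$ is a polynomial-time many-one reduction from $L$ to $\SAT(2)$, and since $L$ was arbitrary, $\SAT(2)$ is $\BH_2$-hard.

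There is essentially no serious obstacle here; the only subtlety is ensuring that the reduction handles the two \NP\ parts $A$ and $B$ \emph{independently}, which is exactly what the product form $(f(x),g(x))$ accomplishes. This argument is the level-two instance of the standard construction showing that the Boolean hierarchy over \NP\ has natural complete problems built from alternating differences of \SAT\ instances.
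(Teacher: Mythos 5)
Your proof is correct, but it is worth noting that the paper does not actually prove this statement: immediately after the theorem it defers to the literature, citing Cai et al.\ (Theorem 5.2), where the result appears as the $k=2$ case of the general theorem that the languages $\SAT(k)$ are complete for the respective levels $\BH_k$ of the Boolean hierarchy. Your argument supplies the standard self-contained proof of exactly the level needed. Both halves are sound: membership follows from the explicit decomposition $\SAT(2)=L_1\setminus L_2$ with $L_1,L_2\in\NP$, and for hardness the decisive point you identify is the right one --- a many-one reduction preserves non-membership as well as membership, so $x\notin B \iff g(x)\notin\SAT$, which is what makes the product reduction $h(x)=(f(x),g(x))$ decouple the two coordinates correctly. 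The trade-off between the two routes: the citation gives the paper the full hierarchy result, and implicitly the variant for 3-\SAT\ that the paper also uses without proof in Theorem \ref{exact} (your construction would need Cook's theorem composed with the standard \SAT-to-3-\SAT\ reduction in each coordinate to recover that); your argument gives a reader a complete, elementary verification at the single level the paper actually invokes.
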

			Theorem \ref{cai} can be found in Cai et al.~\cite[Theorem 5.2]{MR972671}.
			(In their notation, $\BH_2 = \NP(2)$.)
			We will use without proof the extension of Theorem \ref{cai} from \SAT\ to 3-\SAT.
			\begin{thm}\label{exact}
				For each $k\ge 1$, $\{E : A(E) = \abs{E} + k\}$ is $\BH_{2}$-complete.
			\end{thm}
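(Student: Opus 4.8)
The plan is to treat membership and hardness separately, with the bulk of the work in the $\BH_2$-hardness.

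\emph{Membership.} For any bound $c$ computable from the input, the predicate ``$A(E)\le c$'' lies in $\NP$: guess a DFA $M$ with at most $c$ states together with an equivalence relation $D$ on its states, and verify in polynomial time that $D$ induces $E$ in the sense of Definition \ref{coheres with} (i.e.\ for all $x,y$ in the domain of $E$, the pair $(x,y)$ is in $E$ exactly when $M$ sends $x$ and $y$ to $D$-equivalent states). Since the number of classes $\abs{E}$ never exceeds the size of the domain, the bounds $c=\abs{E}+k$ and $c=\abs{E}+k-1$ keep the guessed automaton polynomially large. Hence
\[
	L_1=\{E:A(E)\le\abs{E}+k\}\quad\text{and}\quad L_2=\{E:A(E)\le\abs{E}+k-1\}
\]
are in $\NP$, and $A(E)=\abs{E}+k$ holds iff $E\in L_1\setminus L_2$, so $\{E:A(E)=\abs{E}+k\}\in\BH_2$.

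\emph{Hardness.} I would reduce $\SAT(2)$ (Theorem \ref{cai}, in its $3$-$\SAT$ form). Write $d(E)=A(E)-\abs{E}$; one always has $d(E)\ge 0$, since coherence makes the assignment of each $E$-class to the $D$-class of the state reached by a representative injective, forcing at least $\abs{E}$ states. Combining Theorem \ref{NPthm}, Theorem \ref{malevolentThm}, and $d(E_\phi)\ge 0$, the map $\phi\mapsto E_\phi$ of Definition \ref{KuncakDef} satisfies $d(E_\phi)=0$ when $\phi$ is satisfiable and $d(E_\phi)=1$ when $\phi$ is unsatisfiable. Fixing once and for all an unsatisfiable formula $\psi$, the gadget $G:=E_\psi$ has $d(G)=1$ unconditionally. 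Given an instance $(\phi_1,\phi_2)$, form the \emph{direct sum}
\[
	E:=E_{\phi_1}\oplus E_{\phi_1}\oplus E_{\phi_2}\oplus G^{\oplus(k-1)},
\]
realized by running disjoint copies of the apparatus of Definition \ref{KuncakDef} on region-separated subsets of $\{0,1\}^*$ (words of distinct copies are never $E$-equivalent, and the routing ``spine'' is arranged to add states and classes in equal measure, so as not to perturb the offset); for $k=1$ no copies of $G$ occur. Then $\abs{E}$ is the sum of the $\abs{E_i}$, and---granting that offsets add under $\oplus$---
\[
	d(E)=2\,d(E_{\phi_1})+d(E_{\phi_2})+(k-1).
\]
Because each $d(E_{\phi_i})\in\{0,1\}$ is $0$ exactly when $\phi_i$ is satisfiable, $d(E)=k$ precisely when $\phi_1$ is satisfiable and $\phi_2$ is unsatisfiable: that case yields $2\cdot 0+1+(k-1)=k$, while the other three cases yield $k-1$, $k+1$, and $k+2$. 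Thus $E\in\{E:A(E)=\abs{E}+k\}$ iff $(\phi_1,\phi_2)\in\SAT(2)$, and the map is clearly polynomial-time computable.

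\emph{The crux.} Everything rests on the additivity $d\bigl(\bigoplus_i E_i\bigr)=\sum_i d(E_i)$. The upper bound is routine: run the witnessing automata for the summands as parallel branches selected by the region markers and count states. The real difficulty is the matching lower bound (superadditivity)---that an optimal cohering DFA cannot economize by sharing states across the independent regions. I would prove it by showing the markers force any cohering DFA to split into per-region sub-automata on disjoint state sets: a state reached while reading region $i$ is never $D$-equivalent to one reached in region $j$ (their words belong to different $E$-classes), and the reachable states of region $i$ with the restricted $D$ cohere with $E_i$, hence number at least $A(E_i)$. Pinning down this disjointness---so that no single reused state (a shared sink, or a common return to the start) lets two regions save simultaneously---is the step I expect to be the main obstacle.
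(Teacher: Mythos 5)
Your overall architecture is the paper's own: membership by writing the set as a difference of two $\NP$ sets, and hardness by reducing from $3$-$\SAT(2)$ via a prefix-separated disjoint union of copies of the gadget $E_\phi$, with multiplicities arranged so that the total number of extra states equals $k$ exactly when $\phi_1$ is satisfiable and $\phi_2$ is not. The paper uses $\ell$ copies of $E_{\phi_1}$ and $k$ copies of $E_{\phi_2}$ with $\ell\ne k$ (possible offsets $0,\ell,k,\ell+k$), while you use multiplicities $2,1$ plus $k-1$ copies of a fixed unsatisfiable gadget (possible offsets $k-1,k,k+1,k+2$); both counting schemes are sound and interchangeable.

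The genuine gap is the crux you yourself flagged, and the repair you sketch would not close it. Coherence (Definition \ref{coheres with}) constrains only states reached by words in the domain $S$ of $E$, so your disjointness argument---``a state reached in region $i$ is never $D$-equivalent to one reached in region $j$, since their words lie in different $E$-classes''---applies only to the class states. But the extra states whose non-shareability is the whole issue are reached \emph{exclusively} by non-domain words: in the gadget, $e$ is entered only via words of the form $1^j0$, which are not in the domain of $E_\phi$, so coherence by itself places no direct restriction on them, and a priori a single such state could serve several regions at once. Indeed, blanket additivity $d\bigl(\bigoplus_i E_i\bigr)=\sum_i d(E_i)$ should not be expected as a general principle: if two regions each merely needed an absorbing trap state with no constraints tying its outgoing transitions to region-specific classes, one shared trap would serve both, breaking superadditivity. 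What saves the construction is gadget-specific: the axioms $1^j011\to r$ force any extra state playing the role of $e$ for region $i$ to send the suffix $11$ into region $i$'s class of $r$ (equivalently, to feed into region $i$'s $l_t$), and since these classes are pairwise non-$D$-equivalent---hence disjoint---across regions and $\delta$ is a function, no single state can do this for two regions. This is exactly the paper's one-line observation that the various ``$e$'' states for distinct copies must be distinct because they transition to distinct ``$l_t$'' states; it is this forced-transition argument, not a general splitting of an optimal DFA into per-region sub-automata, that completes the lower bound.
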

			\begin{proof}
				For each $k\ge 1$, $\{E: A(E) = \abs{E} + k\}$ equals
				\[
					\{E: A(E)\le \abs{E} + k\} \setminus \{E: A(E) \le \abs{E} +k - 1\} \in \BH_2.
				\]
				It remains to show $\BH_2$-hardness. Let $M_1$ and $M_2$ be automata as indicated in Figure \ref{malevolence}
				for two 3-\SAT\ instances $\phi_1$ and $\phi_2$, respectively.
				Let $k$ and $\ell\ne k$ be positive integers and let $\sigma_1,\dots,\sigma_{k+\ell}$ be incomparable words.
				We define $M$ in a natural way so that
				\[
					L(M) = \bigcup_{i=1}^{\ell}\{\sigma_i\, x: x\in L(M_1)\} \cup \bigcup_{i=\ell+1}^k\{\sigma_i\,x: x\in L(M_2)\}.
				\]
				Note that the various ``$e$'' states for distinct copies of $M_1$ and $M_2$ must be distinct, since they transition to distinct ``$l_t$'' states.
				\begin{table}
					\centering
					\begin{tabular}{|c|c|c|}
						\hline
						$\phi_1$ satisfiable? & $\phi_2$ satisfiable? & Number of extra states \\
						\hline
						no & no & $\ell + k$\\
						no & yes& $\ell$\\
						yes& no & $k$\\
						yes&yes & $0$\\
						\hline
					\end{tabular}
					\caption{The number of extra states needed for Theorem \ref{exact}.}\label{summer-of}
				\end{table}
				Thus, we have
				\begin{quote}
					$\phi_1$ is satisfiable, but $\phi_2$ is not
				\end{quote}
				if and only if
				\begin{quote}
					we need no extra state for $\phi_1$, but $k$ extra states for $\phi_2$,
				\end{quote}
				if and only if (by Table \ref{summer-of})
				\begin{quote}
					we need $k$ extra states overall,
				\end{quote}
				if and only if
				$A(E) = \abs{E} + k$.
				We make corresponding changes in the ``axioms'' (the elements of $S$ in Definition \ref{KuncakDef}) for $E_{\phi_i}$.
				Applying Theorem \ref{cai} completes the proof.
			\end{proof}
	\bibliographystyle{plain}
	\bibliography{complexity-of-automatic-complexity}
\end{document}